%% file: main.tex
\documentclass[lettersize, journal]{IEEEtran}
\usepackage{cite}
\ifCLASSINFOpdf
\fi
\usepackage{algorithm}
\usepackage{algpseudocode}  
\usepackage{amsmath,amsfonts,amsthm}
\usepackage{amssymb}
\usepackage{tikz}
\usepackage{pgfplots}
\usepgfplotslibrary{fillbetween}
\usepackage{float}
\usepackage{algpseudocode}
\usepackage{tikz}
\usepackage{graphicx}
\usetikzlibrary{matrix}
\usepackage{bbm}
\usepackage{graphicx}
\usepackage{subcaption}
\usepackage{siunitx}
\usetikzlibrary{arrows,automata}
\usepackage{multicol}
\usepackage{comment}
\usepackage{epstopdf}
\usepackage{multirow}
\usepackage[font=scriptsize,skip=5pt]{caption}
\usepackage{soul}
\usepackage{todonotes}
\usepackage{color, colortbl}
\usepackage{array}
\usetikzlibrary{patterns}
\usepackage{balance}

\usepackage{enumitem}
\usepackage{footmisc}
\usetikzlibrary{patterns,fillbetween}
\usepackage{wrapfig}

\newtheorem{theorem}{Theorem}

\newtheorem{lemma}[theorem]{Lemma}

\newtheorem{remark}{Remark}

\allowdisplaybreaks

\begin{document}
\title{Version AoI Optimization under Power and General Distortion Constraints in Uplink NOMA}

\author{
Gangadhar Karevvanavar,
Rajshekhar V.~Bhat,
and Nikolaos~Pappas,~\IEEEmembership{Senior Member,~IEEE}

    \thanks{Gangadhar Karevvanavar and Rajshekhar V.~Bhat are with the Indian Institute of Technology Dharwad, Dharwad, India
(e-mail: 212021007@iitdh.ac.in; rajshekhar.bhat@iitdh.ac.in).

Nikolaos~Pappas is with the Dept. of Computer and Information Science,
Linköping University, Linköping, Sweden
(e-mail: nikolaos.pappas@liu.se).}
    \thanks{
        The work of Rajshekhar V. Bhat was supported by the TTDF, DoT, Government of India, under Proposal ID TTDF/6G/492 through TCOE India.} 
\thanks{The work of Nikolaos Pappas was supported by ELLIIT and the European Union (6G-LEADER, 101192080).
    }
}

\maketitle
\pagenumbering{arabic}

\begin{abstract}
The Version Age of Information (VAoI) quantifies information freshness by measuring the number of versions the receiver lags behind. This paper studies VAoI minimization in an $M$-user uplink non-orthogonal multiple access (NOMA) system where users maintain single-packet buffers and transmissions are constrained by average power and information-quality constraints, modeled by a general distortion function. A fundamental trade-off arises: transmitting more bits per update improves information quality but increases power consumption, reducing transmission opportunities and increasing VAoI, while transmitting fewer bits has the opposite effect. We formulate a weighted-sum VAoI minimization problem as a convex optimization problem. However, users’ power allocations are coupled through multiple-access capacity constraints per channel state, leading to exponential complexity. To address this, we develop a VAoI-agnostic stationary randomized policy  that jointly optimizes scheduling, bit allocation, and power control without tracking instantaneous VAoI, and achieves a provable 2-approximation to the globally optimal average VAoI. Leveraging Lagrangian dual decomposition, we derive closed-form expressions for the scheduling probabilities and power allocations, and efficiently determine the optimal successive interference cancellation decoding order, avoiding exhaustive search Numerical results show that NOMA significantly outperforms time-division multiple access (TDMA): at high power budgets, NOMA achieves near-zero VAoI, whereas TDMA saturates at a non-zero value, consistent with the analysis. The proposed general distortion framework accommodates diverse bit-priority structures by assigning unequal importance to different bits within an update.
\end{abstract}

\begin{IEEEkeywords}
Version age of information, non-orthogonal multiple access, successive interference cancellation, stationary randomized policy, information quality.
\end{IEEEkeywords}

\IEEEpeerreviewmaketitle

\section{Introduction}
Real-time monitoring and control applications, such as vehicular networks and industrial Internet of Things (IoT), require not only timely updates but also sufficient information quality for effective decision-making. In such systems, a central node collects information from multiple distributed sources. Traditional metrics such as throughput and latency do not capture information freshness, which is quantified by the age of information (AoI), defined as the time elapsed since the generation of the freshest received update at the destination~\cite{kaul2012real,pappas2023age,kosta2017age,sun2019age}. While AoI captures timeliness, it treats all successfully received updates equally, irrespective of their completeness or informational significance. Version age of information (VAoI) refines this notion by measuring version staleness rather than time elapsed, i.e., how many versions the receiver lags behind~\cite{yates2021age, Gangadhar_VAoI,baturalp2022version, DelfaniTCOM25, DelfaniCOMML25, DelfaniINFOCOM26}. However, neither AoI nor VAoI accounts for update quality, motivating the incorporation of distortion metrics.

In practice, not all bits within an update carry equal importance. For instance, in sensor networks, initial bits may encode critical metadata (e.g., anomaly flags), while later bits provide finer-grained measurements. Similarly, in video streaming, header information and the essential coarse components of intra-coded (I-) frames are typically prioritized over fine-grained refinement data. To capture such heterogeneous bit priorities, we model information quality using a general distortion metric $\delta(\rho)$ that depends on the number of successfully transmitted bits $\rho \in \{0,1,\ldots,r_{\max}\}$, without restricting $\delta(\cdot)$ to be convex or monotonically decreasing, as is commonly assumed \cite{WiOpt-21,AoI-Dist-Tradeoff4}. This framework captures scenarios where initial bits convey task-critical information, and later bits provide refinement, with $\delta(\cdot)$ assigning higher penalties to incomplete transmission of high-priority bits.

In this work, we aim to minimize the long-term average VAoI in $M$-user uplink multiple access channel (MAC) subject to average power and distortion constraints. A fundamental trade-off arises between timeliness and information quality: transmitting fewer bits per update conserves power and enables more frequent updates, reducing VAoI at the cost of higher distortion, while transmitting more bits improves information quality but consumes more power and reduces transmission opportunities, potentially increasing VAoI. This trade-off becomes more severe in uplink non-orthogonal multiple access (NOMA) systems, where users’ power allocations are coupled through interference, and the decoding strategy, such as the successive interference cancellation (SIC), directly impacts both achievable rates and required transmit powers.

We review related work to position our contributions. The AoI framework, studied in~\cite{kaul2012real,pappas2023age,kosta2017age,sun2019age}, characterized the trade-off between update frequency and information freshness and was later extended to VAoI to capture version-based staleness~\cite{yates2021age}. These concepts have been extended to multiple sources, energy harvesting, and advanced multiple access~\cite{yates2018age, costa2016age}. Minimizing AoI and its variants under practical constraints, such as power, distortion, and energy harvesting, yields challenging joint optimization problems. Approximation algorithms and stationary randomized policies have emerged as effective low-complexity solutions with provable performance guarantees~\cite{kadota2018optimizing, NetworkWithStochasticArrivals, Bhat-JSAC,goal_oriented_srp_2024, kumar2023, wiopt_srp_2024, random_access_age_2020}. In this work, we develop a low-complexity stationary randomized policy and derive a performance bound.

To support simultaneous uplink transmissions in the MAC, we adopt NOMA, which enables users to share resources via superposition coding and receiver-side SIC. Several recent works have investigated NOMA for AoI optimization. In~\cite{zhiguo2023age}, a time-division multiple access (TDMA) network is augmented with two-user cognitive-radio NOMA, where the secondary user is decoded first with a limited rate to ensure primary-user priority, showing that NOMA improves AoI by increasing transmission opportunities and reducing access delay. The uplink coexistence of mission-critical (MC) and enhanced mobile broadband (eMBB) services has been studied by comparing puncturing and NOMA: puncturing yields fresher MC updates, whereas NOMA can increase eMBB rates by up to five times at the cost of some freshness loss~\cite{khodakhah2024balancing,sun2025ageinformationanalysisnomaassisted}. NOMA-assisted grant-free uplink schemes with randomly arriving packets have also been shown to reduce AoI compared to OMA-based approaches~\cite{sun2025ageinformationanalysisnomaassisted} significantly. In contrast to some of these works, which focus on two-user systems, we consider a general $M$-user setting and jointly optimize transmitting bits and transmit powers under average power and distortion constraints, as identified as a future direction in~\cite{zhiguo2023age}.

Several works study AoI optimization with distortion, highlighting the trade-off between timeliness and update quality. AoI minimization under constant and age-dependent distortion constraints shows that longer processing improves quality at the cost of increased AoI~\cite{9524846}. Online scheduling formulations balancing AoI, energy, and quality have been proposed, with greedy algorithms achieving provable competitiveness~\cite{AoI-Dist-Tradeoff4}. Related metrics, such as the age of incorrect estimates and age of incorrect information (AoII), jointly capture freshness and accuracy, leading to threshold-type optimal policies~\cite{9589182,9137714}. Joint sampling and compression strategies have been shown to achieve asymptotically optimal age--distortion trade-offs~\cite{10715699}. In discrete-time settings, optimal age--distortion policies have been characterized via dynamic programming~\cite{9611456}, and deep reinforcement learning has been applied to semantic-empowered NOMA systems for AoII minimization under power and bandwidth constraints~\cite{10336741,Guidan}. Collectively, these works underscore the importance of distortion-aware policies for ensuring timely, high-quality updates.

Our work is most closely related to~\cite{Bhat-JSAC, WiOpt-21, Gangadhar_VAoI}. In~\cite{Bhat-JSAC}, AoI minimization in fading uplink MACs under TDMA and NOMA is studied. While both works consider NOMA-based uplink MACs, our approach differs in key aspects. \emph{First}, we optimize VAoI~\cite{yates2021age,baturalp2022version}, which measures version staleness and increases only upon new arrivals, capturing both timeliness and content staleness and inducing a different Markov structure than AoI. \emph{Second}, we incorporate general distortion metric $\delta(\rho)$ enabling flexible bit-priority modeling (e.g., critical metadata in initial bits~\cite{pappas2023age,gunduz_sc2019} and refinement in later bits), leading to joint optimization of \emph{whether} to transmit and \emph{how many bits} to transmit per user—a trade-off absent in~\cite{Bhat-JSAC}. Compared to~\cite{WiOpt-21}, which studies AoI minimization with power–distortion constraints under TDMA-only MAC, we consider NOMA, introducing $2^M-1$ MAC constraints for all feasible user rate vectors; moreover, while~\cite{WiOpt-21} assumes a fixed distortion function, our general $\delta(\rho)$ captures a broad class of bit-priority structures. Relative to~\cite{Gangadhar_VAoI}, which studies VAoI minimization in a broadcast channel, the uplink MAC considered here exhibits different interference coupling, requiring coordinated power allocation to satisfy MAC constraints across rate vectors, whereas~\cite{Gangadhar_VAoI} assumes base-station-controlled transmissions with fixed bit allocation. Finally, distortion constraints require new analytical tools for jointly optimizing scheduling, bit allocation, and power control.

Several other works address related but distinct problems. Works such as~\cite{AoI-Dist-Tradeoff4,9589182,9137714} incorporate information quality via AoII or estimation-error metrics but do not study the joint optimization of VAoI and distortion under NOMA. Recent studies~\cite{zhiguo2023age,khodakhah2024balancing,sun2025ageinformationanalysisnomaassisted} analyze AoI in NOMA systems but are restricted to two-user settings and do not incorporate distortion constraints. Works on AoI--distortion trade-offs~\cite{9524846,10715699,9611456} focus on single-user or TDMA systems, thereby avoiding the complexity of NOMA-MAC constraints. Although power allocation in MAC is well studied in the capacity literature~\cite{Sumei_Sum_BC_Region}, extending these techniques to VAoI optimization under distortion constraints requires new formulations and methods. The main contributions of this work are as follows:
\begin{itemize}[leftmargin=*]
    \item We formulate VAoI minimization in uplink NOMA systems under  average power and distortion constraints as a convex optimization problem. Unlike prior NOMA-based age optimization works~\cite{Bhat-JSAC,zhiguo2023age}, we explicitly incorporate information-quality constraints that capture the trade-off between update frequency and distortion. A flexible distortion metric captures diverse bit-priority structures, from uniform importance to metadata-critical updates.

    \item We propose a VAoI-agnostic, stationary, randomized policy (VA-SRP) that operates without tracking instantaneous VAoI and achieves a provable 2-approximation to the globally optimal VAoI across all policies. Leveraging the structure of the MAC capacity region, we employ Lagrangian dual
decomposition and derive closed-form expressions for the
scheduling probabilities and power allocations via SIC, efficiently handling
the $2^M - 1$ MAC constraints. The optimal SIC decoding order is obtained via sorting in $\mathcal{O}(M\log M)$ time, avoiding exhaustive $\mathcal{O}(M!)$ search.

    \item Numerical results demonstrate substantial VAoI gains of NOMA over TDMA. At high power budgets, NOMA achieves near-zero VAoI by enabling simultaneous multi-user transmissions with status update delivery probabilities approaching unity. In contrast, TDMA saturates at a non-zero VAoI due to its single-user-per-slot limitation. We compare VA-SRP with heuristic policies, demonstrating performance and computational advantages, and illustrate the impact of distortion structures, including uniform (linear), metadata-critical (step), and refinement-oriented (convex) models.
\end{itemize}

\section{System Model}
We consider a system with \( M \) sources transmitting status updates to a base station (BS) in slotted time. We describe the packet arrival, channel, distortion, power, and VAoI evolution models, and formulate a long-term average VAoI minimization problem subject to average power and distortion constraints.

\subsection{Packet Arrival Model}
Each user $i \in \mathcal{M} \triangleq \{1,2,\ldots,M\}$ receives a new version update packet at the beginning of each slot with probability $\lambda_i$. Each user maintains a \emph{single-packet} queue that stores only the most recent packet, rendering previous packets obsolete. Let $A_i(t)$ be an indicator variable such that $A_i(t)=1$ if the $i^{\mathrm{th}}$ user receives a new packet at time~$t$, and $A_i(t)=0$ otherwise. The sequence $\{A_i(t)\}_{t\ge1}$ is an independent and identically distributed (i.i.d.) Bernoulli process with $\mathbb{P}(A_i(t)=1)=\lambda_i$, for all $i\in\mathcal{M}$ and $t \in \{1, 2, \ldots\}$.

Each update packet contains $r_{\max}$ bits. In slot~$t$, User~$i$ transmits $\rho_i(t)\in\{0,1,\ldots,r_{\max}\}$ bits, where $\rho_i(t)>0$ corresponds to a valid update. The quality of the received update is captured by the distortion metric $\delta(\rho_i(t))$ defined in Section~II-C and depends on the number of successfully transmitted bits. For instance, when initial bits encode critical information, and later bits provide refinement, $\delta(\cdot)$ can reflect this priority structure. If a packet is not transmitted, is considered for transmission in the next slot unless replaced by a new arrival. If $\rho_i(t) > 0$ bits are transmitted in a slot, the remaining $r_{\max} - \rho_i(t)$ bits are discarded at the end of the slot, and the queue is deemed empty. 
We assume unsent bits are dropped rather than buffered to avoid state-space explosion and to reflect grant-free NOMA and ultra-reliable low latency
communication (URLLC) operation, finite-blocklength coding, and AoI/VAoI packet management~\cite{dai2015non,islam2017power,polyanskiy2010channel,kaul2012real,yates2021age}. This yields a tractable VAoI model that depends only on current decisions and admits stationary randomized policies with provable guarantees, while ignoring cross-slot distortion accumulation, which is left for future work.

\subsection{Channel Model}
We model the wireless channel as block fading, remaining constant within a slot and changing independently across slots. Let \( H_i(t) \) denote the channel power gain between User \( i \) and the BS in slot \( t \), where \( \{H_i(t)\}_{t\geq 1} \) are i.i.d. random variables taking values in a finite set \( \mathcal{H}_i \subset \mathbb{R}^+ \). The channel realization \( h_i(t) \) is perfectly estimated at the beginning of each slot using pilot signals. The channel state vector \( \mathbf{h}(t) \triangleq (h_1(t),\ldots,h_M(t)) \) is known at the start of each slot. We denote \( \boldsymbol{\mathcal{H}} \)  as the the set of all channel states.

\subsection{Distortion Model}
Let $Q_i(t)\in\{0,1\}$ denote whether User~$i$ has a pending packet at the beginning of slot~$t$, i.e., $Q_i(t)=1$ if User~$i$ has a packet in the queue, and $Q_i(t)=0$ otherwise. Per-slot distortion is defined as
\[
d_i(t) \;=\;
\begin{cases}
\delta(\rho_i(t))\, \mathbb{I}\{\rho_i(t) > 0\}, & \text{if } Q_i(t) = 1,\\[6pt]
0, & \text{if } Q_i(t) = 0,
\end{cases}
\]
where $\delta:[0,r_{\max}]\to\mathbb{R}^+$ is a function of transmitted bits $\rho_i(t)$ that captures bit importance, e.g., higher penalties when high-priority bits are missing and diminishing returns for extra bits. The distortion is zero when $\rho_i(t)=0$ or when $Q_i(t)=0$, corresponding to no transmission or empty queue.

An alternative model could incur distortion whenever a packet is pending but not transmitted, but this would cause VAoI and distortion to evolve similarly, coupling timeliness and quality. In our formulation, distortion captures \emph{transmission quality}, while VAoI captures \emph{staleness}, enabling independent control of frequent low-quality versus infrequent high-quality updates. Models where distortion accrues during deferral are better suited to settings with intrinsic information decay and are beyond the scope of this work.

\subsection{Rate-Power Relationship}
In an uplink MAC using the NOMA scheme, users transmit simultaneously to the BS. In slot $t$, User $i\in\mathcal{M}$ transmits $\rho_i(t)\in\{0,1,\ldots,r_{\max}\}$ bits using power $f_i(t)$ under channel power gain $h_i(t)$. All users encode and transmit $(\rho_1(t),\ldots,\rho_M(t))$ bits simultaneously using powers $(f_1(t),\ldots,f_M(t))$; if $\rho_i(t)=0$, then $f_i(t)=0$. The BS receives a superimposed signal comprising
the simultaneous transmissions from all users and decodes all users’ transmissions. For successful decoding under channel state $\mathbf{h}(t)$, the rate–power pairs $(\rho_i(t),f_i(t))_{i\in\mathcal{M}}$ must satisfy
\begin{align}\label{eq:MAC}
    \sum_{i \in \mathcal{S}} \rho_i(t) \leq g\!\left( \sum_{i \in \mathcal{S}} f_i(t) h_i(t) \right), \;\; \forall \mathcal{S} \subseteq \mathcal{M}, \; \forall t.
\end{align}
Here $g(\cdot)$ is a concave, non-decreasing function with $g(0)=0$. The MAC capacity region induces $2^M - 1$ constraints when users transmit simultaneously using the NOMA scheme.

\subsection{Performance Metric and Problem formulation}
Let $z_i(t)$ denote the version index of the packet in User~$i$’s queue at the beginning of slot~$t$, given by $z_i(t)=\sum_{\tau=1}^t A_i(\tau)$, which increments only upon new arrivals. Let $y_i(t)$ denote the version of the most recently received packet from User~$i$ at the BS, which updates only when $\rho_i(t)>0$. The instantaneous VAoI is defined as $\Delta_i(t) = z_i(t) - y_i(t)$, measured at the end of each time slot, and evolves as follows:
\begin{align}\label{eq:age_evolution}
    \Delta_i(t) = \big( \Delta_i(t-1) + A_i(t) \big) \big(1 - \mathbb{I}\{\rho_i(t) > 0\}\big),
\end{align}
for all $i \in \mathcal{M}$ and \( t \in \{1,2,\ldots\} \), with \( \Delta_i(0) = 0, \forall i \). 
The long-term expected average VAoI is $\lim_{T\to\infty}\frac{1}{T}\sum_{t=1}^T\sum_{i=1}^M w_i\,\mathbb{E}[\Delta_i(t)]$, where $(w_1, w_2, \ldots, w_M)$ are user weights.

Our goal is to design a scheduling policy that jointly selects transmitting bits and power per user in each slot to minimize the long-term expected average VAoI subject to average power and distortion constraints. Specifically, we solve the following optimization problem.
\begin{subequations}\label{eq:main-opt-problem}
\begin{align}
V_{\rm opt} = \underset{\phi}{\text{min}} \;&
\lim_{T \to \infty} \frac{1}{T} 
\sum_{t=1}^{T} \sum_{i=1}^{M} 
w_i\, \mathbb{E}\left[\Delta_i(t)\right], \\
\text{subject to} \;\; &
\lim_{T \to \infty} \frac{1}{T} 
\sum_{t=1}^{T} \mathbb{E}\big[f_i(t)\big]
\le \bar{P}_i, \;\; \forall i \in \mathcal{M}, \label{eq:pow_constraint}\\
&
\lim_{T \to \infty} \frac{1}{T} 
\sum_{t=1}^{T} \mathbb{E}\big[d_i(t)\big]
\le \bar{D}_i, \;\; \forall i \in \mathcal{M}, \label{eq:dist_constraint}\\
& \eqref{eq:MAC}, \;\eqref{eq:age_evolution}. \notag
\end{align}
\end{subequations}
Here, $\phi$ denotes the scheduling policy. The parameters $\bar{P}_i$ and $\bar{D}_i$ denote the bounds on the average transmit power and the average distortion, respectively. All expectations are taken over the randomness in the system under policy $\phi$.

We also consider a TDMA benchmark (which allows at most one user to transmit in each slot), obtained by solving~\eqref{eq:main-opt-problem} with the following additional constraint:
\begin{align*}
    \sum_{i=1}^{M} \mathbb{I}\{\rho_i(t) > 0\} \leq 1, \quad \forall t \in \{1, 2, \ldots, T\}.
\end{align*}

\section{Optimization Analysis}
We study VAoI-agnostic stationary randomized policies (VA-SRPs) to solve \eqref{eq:main-opt-problem}, derive the optimal policy, and obtain a performance bound relative to $V_{\rm opt}$.

\subsection{VAoI-Agnostic Stationary Randomized Polices (VA-SRP)}
We define the class of VA-SRPs, reformulate \eqref{eq:main-opt-problem} under this class, and solve the resulting problem.
\subsubsection{Problem Reformulation}
Define
$\boldsymbol{\mathcal{R}} \triangleq \{ (\rho_1, \rho_2, \ldots, \rho_M) \allowbreak \mid \rho_i \in \{0, 1, \ldots, r_{\rm max}\}, \forall i \in \mathcal{M} \} 
= \{0, 1, \ldots, r_{\rm max}\}^M$.

{Policy (VA-SRP):} 
\emph{In each slot, when channel state is $\mathbf{h}$, the policy selects a rate vector $\boldsymbol{\rho} \triangleq (\rho_1, \dots, \rho_M) \in \boldsymbol{\mathcal{R}}$ with probability $\mu(\mathbf{h}, \boldsymbol{\rho})$, where $\rho_i \in \{0, 1, \ldots, r_{\max}\}$ specifies the number of bits transmitted by User~$i$. The corresponding transmit-powers vector $\boldsymbol{f}(\mathbf{h}, \boldsymbol{\rho}) \triangleq (f_i(\mathbf{h}, \boldsymbol{\rho}))_{i \in \mathcal{M}}$ satisfies the MAC constraints.}

This policy does not require knowledge of instantaneous VAoI, simplifying implementation. We next derive expressions for the long-term average VAoI, power, and distortion.

\paragraph{Long-term Expected Average VAoI under VA-SRP}
Under VA-SRP, the conditional probability of a successful status update delivery from User~$i$ given channel state $\mathbf{h}$ is
$\sum_{\boldsymbol{\rho}\in\boldsymbol{\mathcal{R}}}\mu(\mathbf{h},\boldsymbol{\rho})\mathbb{I}\{\rho_i>0\}$. By \eqref{eq:age_evolution}, $\rho_i>0$ implies a successful update and VAoI reset. Thus, the per-slot success probability for User~$i$ is $p_i(\boldsymbol{\mu})
= \sum_{\boldsymbol{\rho}\in\boldsymbol{\mathcal{R}}}\mathbb{E}_{\mathbf{h}}\!\left[\mu(\mathbf{h},\boldsymbol{\rho})\mathbb{I}\{\rho_i>0\}\right]$,
where $\boldsymbol{\mu}\triangleq(\mu(\mathbf{h},\boldsymbol{\rho}))_{\mathbf{h},\boldsymbol{\rho}}$.

Under VA-SRP, the VAoI process forms a Markov chain since the policy depends only on the channel state and arrivals and channels are i.i.d. Although \cite{Gangadhar_VAoI} considers a broadcast channel, the per-user VAoI evolution under an SRP has the same Markov structure (see Fig.~3 in \cite{Gangadhar_VAoI}). Consequently, the objective function in \eqref{eq:main-opt-problem} under VA-SRP can be obtained from the stationary distribution of this Markov chain. Applying~\cite[Theorem~1]{Gangadhar_VAoI}, the long-term average VAoI for User~$i$ is
\begin{align}\label{eq:SRP_VAoI}
    \lim_{T \to \infty} \frac{1}{T} \sum_{t=1}^{T} \mathbb{E}[\Delta_i(t)] 
    = \lambda_i \!\left(\frac{1}{p_i(\boldsymbol{\mu})} - 1 \right), 
    \quad \forall i \in \mathcal{M}.
\end{align}

\paragraph{Long-term Expected Average Transmit Power under VA-SRP}
Let $P_i(\boldsymbol{\mu}, \boldsymbol{F})$ denote the expected average transmit power of User~$i$, which under NOMA depends on all users’ transmission decisions. It is given by
$
P_i(\boldsymbol{\mu}, \boldsymbol{F})
= \sum_{\boldsymbol{\rho}\in\boldsymbol{\mathcal{R}}}
\mathbb{E}_{\mathbf{h}}\!\left[\mu(\mathbf{h},\boldsymbol{\rho}) f_i(\mathbf{h},\boldsymbol{\rho})\right],
$
where $f_i(\mathbf{h},\boldsymbol{\rho})$ is the power required to transmit $\rho_i$ bits under joint transmission $\boldsymbol{\rho}$ and channel state $\mathbf{h}$. Here, $\boldsymbol{F}\triangleq (f_i(\mathbf{h},\boldsymbol{\rho}))_{i,\mathbf{h},\boldsymbol{\rho}}$ denotes the collection of power allocations. The average power constraints under VA-SRP are $P_i(\boldsymbol{\mu},\boldsymbol{F})\le\bar{P}_i$, for all $i\in\mathcal{M}$.

\paragraph{Long-term Expected Average Distortion under VA-SRP}
By the law of total expectation, the per-slot expected distortion for User~$i$ is $\mathbb{E}[d_i(t)]= \mathbb{E}[d_i(t)\mid Q_i(t)=1]\mathbb{P}(Q_i(t)=1)$, since $d_i(t)=0$ when $Q_i(t)=0$. Conditioned on a pending packet, the expected per-slot distortion under VA-SRP is $\mathbb{E}[d_i(t)\mid Q_i(t)=1]
= \sum_{\boldsymbol{\rho}\in\boldsymbol{\mathcal{R}}}
\mathbb{E}_{\mathbf{h}}\!\left[\mu(\mathbf{h},\boldsymbol{\rho})\delta(\rho_i)\mathbb{I}\{\rho_i>0\}\right]$. Thus, $\mathbb{E}[d_i(t)]
= \mathbb{P}(Q_i(t)=1)
\sum_{\boldsymbol{\rho}\in\boldsymbol{\mathcal{R}}}
\mathbb{E}_{\mathbf{h}}\!\left[\mu(\mathbf{h},\boldsymbol{\rho})\delta(\rho_i)\mathbb{I}\{\rho_i>0\}\right]$.

The steady-state probability of a pending packet for User~$i$ follows from~\cite[Theorem~1]{Gangadhar_VAoI} as
\begin{align}\label{eq:Q_prob}
    \mathbb{P}(Q_i(t)=1) = \frac{\lambda_i}{\lambda_i \left(1-p_i(\boldsymbol{\mu})\right) + p_i(\boldsymbol{\mu})}, \quad \forall i \in \mathcal{M},
\end{align}
where $\lambda_i$ is the packet arrival probability and $p_i(\boldsymbol{\mu})$ is the status update delivery probability. This occupancy probability depends only on arrivals and transmissions, and is independent of power and distortion constraints. Substituting yields the long-term average distortion for User~$i$
\begin{align*}
D_i(\boldsymbol{\mu})=\frac{\lambda_i\sum_{\boldsymbol{\rho} \in \boldsymbol{\mathcal{R}}} \mathbb{E}_{\mathbf{h}}\big[ \mu(\mathbf{h},\boldsymbol{\rho}) \, \delta(\rho_i)\mathbb{I}\{\rho_i > 0\} \big]}{\lambda_i \left(1-p_i(\boldsymbol{\mu})\right) + p_i(\boldsymbol{\mu})}, \quad \forall i \in \mathcal{M}.
\end{align*}
Thus, the distortion constraints under VA-SRP, $D_i(\boldsymbol{\mu})\le\bar{D}_i$, are equivalently written as
\begin{align*}
    D'_i(\boldsymbol{\mu}) 
    &\triangleq \lambda_i \sum_{\boldsymbol{\rho} \in \boldsymbol{\mathcal{R}}} \mathbb{E}_{\mathbf{h}}\big[ \mu(\mathbf{h}, \boldsymbol{\rho}) \, \delta(\rho_i) \mathbb{I}\{\rho_i > 0\} \big]\nonumber\\
    &\le \bar{D}'_i(\boldsymbol{\mu}) \triangleq \bar{D}_i \, \Big(\lambda_i - \lambda_i p_i(\boldsymbol{\mu}) + p_i(\boldsymbol{\mu})\Big), \; \forall i \in \mathcal{M}.
\end{align*}

\paragraph{Reformulation of \eqref{eq:main-opt-problem} under VA-SRP}
Using \eqref{eq:SRP_VAoI} and the power and distortion constraints, \eqref{eq:main-opt-problem} is reformulated under VA-SRPs. To expose its convex structure, we introduce auxiliary variables
\begin{align}\label{eq:pi_def}
\pi_i(\mathbf{h}, \boldsymbol{\rho}) \triangleq \mu(\mathbf{h}, \boldsymbol{\rho}) \cdot f_i(\mathbf{h}, \boldsymbol{\rho}),
\end{align}
for all $i \in \mathcal{M}, \; \mathbf{h} \in \boldsymbol{\mathcal{H}}, \; \boldsymbol{\rho} \in \boldsymbol{\mathcal{R}}$,
representing the expected power contribution of User~$i$ under channel state $\mathbf{h}$ and rate vector $\boldsymbol{\rho}$. The reformulated problem is
\begin{subequations}\label{eq:SRP_pi}
\begin{align}
V_{\rm SRP} = \min_{\boldsymbol{\mu}, \boldsymbol{\pi}} \; & \sum_{i=1}^{M} w_i \lambda_i \left(\frac{1}{p_i(\boldsymbol{\mu})} - 1\right),\label{eq:obj_pi}\\
\text{subject to} \;\; & \sum_{\boldsymbol{\rho} \in \boldsymbol{\mathcal{R}}} \mathbb{E}_{\mathbf{h}}[\pi_i(\mathbf{h}, \boldsymbol{\rho})] \leq \bar{P}_i, \quad \forall i \in \mathcal{M}, \label{eq:pow_pi}\\
& D'_i(\boldsymbol{\mu}) \leq \bar{D}'_i(\boldsymbol{\mu}), \quad \forall i \in \mathcal{M}, \label{eq:dist_pi}\\
& \mu(\mathbf{h}, \boldsymbol{\rho}) \cdot g^{-1}\left(\sum_{i \in \mathcal{S}} \rho_i\right) \leq \sum_{i \in \mathcal{S}} \pi_i(\mathbf{h}, \boldsymbol{\rho}) h_i, \notag\\
& \qquad \qquad \forall \mathcal{S} \subseteq \mathcal{M}, \; \mathbf{h} \in \boldsymbol{\mathcal{H}}, \; \boldsymbol{\rho} \in \boldsymbol{\mathcal{R}}, \label{eq:MAC_pi}\\
& 0 \leq \mu(\mathbf{h}, \boldsymbol{\rho}) \leq 1, \quad \forall \mathbf{h} \in \boldsymbol{\mathcal{H}}, \; \boldsymbol{\rho} \in \boldsymbol{\mathcal{R}}, \label{eq:mu_bound}\\
& \sum_{\boldsymbol{\rho} \in \boldsymbol{\mathcal{R}}} \mu(\mathbf{h}, \boldsymbol{\rho}) = 1, \quad \forall \mathbf{h} \in \boldsymbol{\mathcal{H}}, \label{eq:mu_sum}
\end{align}
\end{subequations}
where $\boldsymbol{\pi}\triangleq(\pi_i(\mathbf{h},\boldsymbol{\rho}))_{i,\mathbf{h},\boldsymbol{\rho}}$.
The MAC constraint~\eqref{eq:MAC_pi} follows from~\eqref{eq:MAC} since $g(\cdot)$ is increasing:
$g^{-1}\!\left(\sum_{i\in\mathcal{S}}\rho_i\right) \le \sum_{i\in\mathcal{S}} f_i(\mathbf{h},\boldsymbol{\rho})h_i$,
which, after multiplying by $\mu(\mathbf{h},\boldsymbol{\rho})\ge0$ and using~\eqref{eq:pi_def}, yields~\eqref{eq:MAC_pi}.

Given the optimal $(\boldsymbol{\mu}^*,\boldsymbol{\pi}^*)$ for~\eqref{eq:SRP_pi}, we have 
\begin{align}\label{eq:F_from_pi}
f_i^*(\mathbf{h}, \boldsymbol{\rho}) =
\begin{cases}
\dfrac{\pi_i(\mathbf{h}, \boldsymbol{\rho})}{\mu(\mathbf{h}, \boldsymbol{\rho})}, & \mu(\mathbf{h}, \boldsymbol{\rho}) > 0, \\
0, & \mu(\mathbf{h}, \boldsymbol{\rho}) = 0,
\end{cases}
\end{align}
for all $i \in \mathcal{M}$ and $(\mathbf{h}, \boldsymbol{\rho})$.

\begin{theorem}\label{thm:convexity} \eqref{eq:SRP_pi} is a convex optimization problem.
\end{theorem}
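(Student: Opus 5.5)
The plan is to verify convexity of the objective and of each constraint of \eqref{eq:SRP_pi} separately, using as decision variables the pair $(\boldsymbol{\mu},\boldsymbol{\pi})$. The key observation is that $p_i(\boldsymbol{\mu})=\sum_{\boldsymbol{\rho}}\mathbb{E}_{\mathbf{h}}[\mu(\mathbf{h},\boldsymbol{\rho})\mathbb{I}\{\rho_i>0\}]$ is a \emph{linear} function of $\boldsymbol{\mu}$. The expectation over $\mathbf{h}$ is a finite weighted sum because $\boldsymbol{\mathcal{H}}$ is finite, and the indicator is a fixed constant for each $\boldsymbol{\rho}$. Likewise, $\sum_{\boldsymbol{\rho}}\mathbb{E}_{\mathbf{h}}[\pi_i(\mathbf{h},\boldsymbol{\rho})]$ is linear in $\boldsymbol{\pi}$, and $D'_i(\boldsymbol{\mu})$ is linear in $\boldsymbol{\mu}$, since $\lambda_i\delta(\rho_i)\mathbb{I}\{\rho_i>0\}$ is a nonnegative constant for each $\boldsymbol{\rho}$. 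No convexity or monotonicity of $\delta$ is needed, which is consistent with the general distortion model.

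\emph{Objective.} The map $x\mapsto 1/x$ is convex and decreasing on $x>0$. Composing it with the affine function $p_i(\boldsymbol{\mu})$ therefore gives a convex function on the domain $\{p_i(\boldsymbol{\mu})>0\}$, which is an open convex set. Scaling by $w_i\lambda_i\ge 0$, subtracting the constant $w_i\lambda_i$ and summing over $i$ preserve convexity, so \eqref{eq:obj_pi} is convex. I would note explicitly that feasible points with $p_i=0$ give infinite objective and can be excluded, or equivalently that the objective is treated as an extended-value convex function.

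\emph{Constraints.} \eqref{eq:pow_pi} is linear in $\boldsymbol{\pi}$. For \eqref{eq:dist_pi}, the right-hand side $\bar{D}'_i(\boldsymbol{\mu})=\bar{D}_i(\lambda_i+(1-\lambda_i)p_i(\boldsymbol{\mu}))$ is affine in $\boldsymbol{\mu}$. Hence $D'_i(\boldsymbol{\mu})-\bar{D}'_i(\boldsymbol{\mu})\le 0$ is a linear inequality. This rewriting, multiplying the fractional constraint $D_i\le\bar{D}_i$ by the positive denominator as the paper already did, is the step where convexity could have failed, so I would point out that the denominator is strictly positive for $\lambda_i>0$ and the equivalence is exact. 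For \eqref{eq:MAC_pi}, $g^{-1}(\sum_{i\in\mathcal{S}}\rho_i)$ is a fixed constant for each $(\mathcal{S},\boldsymbol{\rho})$, because $\boldsymbol{\rho}$ indexes the variables rather than being a variable. Each MAC constraint is therefore a linear inequality in $(\mu(\mathbf{h},\boldsymbol{\rho}),\boldsymbol{\pi}(\mathbf{h},\boldsymbol{\rho}))$. This is precisely the purpose of the substitution \eqref{eq:pi_def}: it removes the bilinear product $\mu f_i$. Finally, \eqref{eq:mu_bound} and \eqref{eq:mu_sum} are linear, and $\pi_i\ge 0$ is implicit and also linear.

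The feasible set is thus a polyhedron intersected with the convex domain of the objective. Minimizing a convex function over it is a convex program, which proves the claim. The main subtlety I expect is not the convexity itself but confirming that the reformulation is faithful: the recovery map \eqref{eq:F_from_pi} must produce powers satisfying \eqref{eq:MAC} whenever $\mu>0$, which follows by dividing \eqref{eq:MAC_pi} by $\mu$. When $\mu=0$, the corresponding $\pi$ can be set to zero without loss of generality.
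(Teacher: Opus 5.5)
Your proposal is correct and takes essentially the same route as the paper's proof in Appendix A. That proof uses three facts: $p_i(\boldsymbol{\mu})$ and $D'_i(\boldsymbol{\mu})$ are affine, the objective is convex because $1/x$ is composed with an affine map, and the power, distortion, MAC, and simplex constraints are linear in $(\boldsymbol{\mu},\boldsymbol{\pi})$. Your extra remarks on the domain $\{p_i(\boldsymbol{\mu})>0\}$ and on the faithfulness of the substitution \eqref{eq:pi_def} are sound refinements that the paper leaves implicit.
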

\begin{proof}
    See Appendix A.
\end{proof}

Despite convexity, solving~\eqref{eq:SRP_pi} directly using standard solvers is computationally challenging due to the exponential number of MAC constraints. For each $(\mathbf{h},\boldsymbol{\rho})$ pair, there are $2^M-1$ constraints corresponding to all non-empty subsets $\mathcal{S}\subseteq\mathcal{M}$, yielding a total of $|\boldsymbol{\mathcal{H}}|\,|\boldsymbol{\mathcal{R}}|\,(2^M-1)$ constraints. This exponential scaling renders direct methods infeasible even for moderate $M$. To overcome this, we employ a Lagrangian dual decomposition that exploits the problem structure and significantly reduces complexity.

\subsubsection{Lagrangian Dual Formulation}
Due to the convexity of~\eqref{eq:SRP_pi}, we solve the problem via Lagrangian duality.
Slater’s condition holds provided the problem is feasible. In particular, there exists
$\varepsilon>0$ such that a randomized policy with strictly positive probabilities
$\mu(\mathbf{h},\boldsymbol{\rho})>0$ for all $(\mathbf{h},\boldsymbol{\rho})$ and sufficiently small
$\pi_i(\mathbf{h},\boldsymbol{\rho})=\varepsilon$ strictly satisfies the power, distortion, and MAC
constraints, while maintaining $0<\mu(\mathbf{h},\boldsymbol{\rho})<1$ and
$\sum_{\boldsymbol{\rho}}\mu(\mathbf{h},\boldsymbol{\rho})=1$.
Hence, Slater’s condition is satisfied and strong duality holds~\cite{Boyd}.
To enable decomposition, we reformulate the problem so that the Lagrangian is linear in
$\boldsymbol{\mu}$.

We introduce auxiliary variables $\boldsymbol{\eta} \triangleq (\eta_i)_{i\in\mathcal{M}}$ to linearize the objective by replacing $1/p_i(\boldsymbol{\mu})$, with linking constraints $p_i(\boldsymbol{\mu})\ge 1/\eta_i$. 
The resulting
equivalent formulation is
\begin{subequations}\label{eq:SRP_eta}
\begin{align}
V_{\rm SRP} = \min_{\boldsymbol{\mu}, \boldsymbol{\pi}, \boldsymbol{\eta}} \; & \sum_{i=1}^{M} w_i \lambda_i (\eta_i - 1), \label{eq:obj_eta}\\
\text{subject to} \;\; & p_i(\boldsymbol{\mu}) \geq \frac{1}{\eta_i}, \quad \forall i \in \mathcal{M}, \label{eq:eta_link}\\
&D'_i(\boldsymbol{\mu}) \leq \bar{D}'_i(\eta_i), \quad \forall i \in \mathcal{M},\label{eq:dist_const_eta}\\
&\eqref{eq:pow_pi}, \eqref{eq:MAC_pi}, \eqref{eq:mu_bound}, \eqref{eq:mu_sum},\nonumber
\end{align}
\end{subequations}
where $\bar{D}'_i(\eta_i) = \bar{D}_i (\lambda_i - \lambda_i/ \eta_i + 1/\eta_i), \; \forall i \in \mathcal{M}$.
Since $w_i\lambda_i>0$, constraint~\eqref{eq:eta_link} binds at optimality, yielding
$\eta_i = 1/p_i(\boldsymbol{\mu})$ and ensuring equivalence between
\eqref{eq:SRP_eta} and~\eqref{eq:SRP_pi}.

We dualize the power, distortion, and linking constraints using Lagrange multipliers
$\boldsymbol{\beta} \triangleq (\beta)_{i \in \mathcal{M}}\ge \mathbf{0}$, $\boldsymbol{\alpha} \triangleq (\alpha)_{i \in \mathcal{M}} \ge \mathbf{0}$, and $\boldsymbol{\nu} \triangleq (\nu)_{i \in \mathcal{M}}  \ge \mathbf{0}$.
By convexity and Slater’s condition, strong duality holds, and the primal solutions
$(\boldsymbol{\mu}^*,\boldsymbol{\pi}^*,\boldsymbol{\eta}^*)$ are optimal if and only if they minimizes the Lagrangian for optimal multipliers
$(\boldsymbol{\beta}^*,\boldsymbol{\alpha}^*,\boldsymbol{\nu}^*)$. The corresponding Lagrangian is given by
\begin{align}\label{eq:Lagrangian_unified}
&L(\boldsymbol{\beta}, \boldsymbol{\alpha}, \boldsymbol{\nu}, \boldsymbol{\mu}, \boldsymbol{\pi}, \boldsymbol{\eta}) 
= \sum_{i=1}^{M} \Bigg(w_i \lambda_i (\eta_i - 1) - \nu_i \left(p_i(\boldsymbol{\mu}) - \frac{1}{\eta_i}\right)\nonumber\\
&\quad+  \beta_i \Big(\sum_{\boldsymbol{\rho} \in \boldsymbol{\mathcal{R}}} \mathbb{E}_{\mathbf{h}}[\pi_i(\mathbf{h}, \boldsymbol{\rho})] - \bar{P}_i\Big) + \alpha_i \left(D'_i(\boldsymbol{\mu}) - \bar{D}'_i(\eta_i)\right) \Bigg). \nonumber\\
\end{align}
The dual function is defined as
\begin{align*}
z(\boldsymbol{\beta}, \boldsymbol{\alpha}, \boldsymbol{\nu}) = \min_{\boldsymbol{\mu}, \boldsymbol{\pi}, \boldsymbol{\eta}} \; &L(\boldsymbol{\beta}, \boldsymbol{\alpha}, \boldsymbol{\nu}, \boldsymbol{\mu}, \boldsymbol{\pi}, \boldsymbol{\eta}),\\
\text{subject to} \;\; &\eqref{eq:MAC_pi}, \; \eqref{eq:mu_bound}, \; \eqref{eq:mu_sum}.\nonumber 
\end{align*}
The dual problem is
\begin{align}\label{eq:dual_problem}
     (\boldsymbol{\beta}^*, \boldsymbol{\alpha}^*, \boldsymbol{\nu}^*) = \arg\max_{\boldsymbol{\beta}\geq 0, \boldsymbol{\alpha} \geq 0, \boldsymbol{\nu} \geq 0} z(\boldsymbol{\beta}, \boldsymbol{\alpha}, \boldsymbol{\nu}).
\end{align}

Once dualized, the power, distortion, and linking constraints are enforced implicitly through the optimal multipliers. Solving the dual problem~\eqref{eq:dual_problem} yields the optimal multipliers
$(\boldsymbol{\beta}^*,\boldsymbol{\alpha}^*,\boldsymbol{\nu}^*)$ and the optimal dual objective
$z(\boldsymbol{\beta}^*,\boldsymbol{\alpha}^*,\boldsymbol{\nu}^*)$. By strong duality,
$V_{\rm SRP}=z(\boldsymbol{\beta}^*,\boldsymbol{\alpha}^*,\boldsymbol{\nu}^*)$, and any
$(\boldsymbol{\mu}^*,\boldsymbol{\pi}^*,\boldsymbol{\eta}^*)$ minimizing the Lagrangian
for optimal multipliers are the primal optimal solutions. For fixed $(\boldsymbol{\beta},\boldsymbol{\alpha},\boldsymbol{\nu})$, the dual function
is minimized by first optimizing $\boldsymbol{\pi}$ for fixed
$(\boldsymbol{\mu},\boldsymbol{\eta})$, followed by optimization over
$(\boldsymbol{\mu},\boldsymbol{\eta})$.

\paragraph{Power Allocation Optimization for Fixed $\boldsymbol{\mu}$ and $\boldsymbol{\eta}$} 
For fixed $(\boldsymbol{\mu},\boldsymbol{\eta})$, we optimize over the power variables $\boldsymbol{F}$, with $\boldsymbol{\pi}$ recovered via~\eqref{eq:pi_def}. The optimal $\boldsymbol{F}^*$ solves
\begin{align*}
    \underset{\boldsymbol{F}}{\text{minimize}} \; &\sum_{i = 1}^M \sum_{\mathbf{h} \in \boldsymbol{\mathcal{H}}} \mathbb{P}(\mathbf{h}) \sum_{\boldsymbol{\rho} \in \boldsymbol{\mathcal{R}}} \beta_i \mu(\mathbf{h}, \boldsymbol{\rho})f_i(\mathbf{h}, \boldsymbol{\rho}),\\
    \text{subject to} \;\; &\eqref{eq:MAC_pi},\nonumber 
\end{align*}
where $\mathbb{P}(\mathbf{h})$ denotes the probability that the channel state is $\mathbf{h}$.
Since the objective is separable across channel states, the problem decomposes into independent subproblems for each $(\mathbf{h},\boldsymbol{\rho})$, given by
\begin{align}\label{eq:power_min}
\underset{\boldsymbol{f}(\mathbf{h}, \boldsymbol{\rho})}{\text{minimize}} \; &\sum_{i=1}^{M} \beta_i f_i(\mathbf{h}, \boldsymbol{\rho}), \\
\text{subject to} \;\; &g^{-1}\left(\sum_{i \in \mathcal{S}} \rho_i\right) \leq \sum_{i \in \mathcal{S}} f_i(\mathbf{h}, \boldsymbol{\rho}) h_i, \quad \forall \mathcal{S} \subseteq \mathcal{M}.\nonumber
\end{align}

Problem~\eqref{eq:power_min} is independent of $\boldsymbol{\mu}$ and $\boldsymbol{\eta}$; hence, $\boldsymbol{f}^*(\mathbf{h},\boldsymbol{\rho})$ is identical for all feasible $(\boldsymbol{\mu},\boldsymbol{\eta})$, and states with $\mu(\mathbf{h},\boldsymbol{\rho})=0$ are immaterial, since such states do not contribute to either the
scheduling probability or the average power constraints. Problem~\eqref{eq:power_min} minimizes a weighted sum power over the MAC capacity region to achieve rate vector $\boldsymbol{\rho}$. Since the MAC region is a polymatroid and we minimizing
a linear objective (weighted sum power) over this region, the optimum is attained at a vertex, corresponding to a specific SIC decoding order.

\textit{NOMA-SIC Power Allocation:}
Under SIC, the BS decodes users sequentially according to a decoding order
$\boldsymbol{\theta} \triangleq (\theta_1,\ldots,\theta_M)$, where $\theta_j$ is index of the user decoded at
stage $j$. The required transmit powers for users, which satisfies the MAC constraints, given by
\begin{align}\label{eq:POW_SIC}
f_{\theta_j}(\mathbf{h}, \boldsymbol{\rho}, \boldsymbol{\theta})
= \frac{g^{-1}(\rho_{\theta_j})}{h_{\theta_j}}
\prod_{k=j+1}^{M} \left(1 + g^{-1}(\rho_{\theta_k})\right),
\quad \forall j \in \mathcal{M}.
\end{align}
with the convention that $g^{-1}(0)=0$ and the empty product equals~1.

To solve~\eqref{eq:power_min} and eliminate the MAC constraints, we consider the transmit power under NOMA--SIC and determine the decoding order that minimizes the weighted sum power
\begin{align}\label{eq:decoding_opt}
\boldsymbol{\theta}^*(\mathbf{h}, \boldsymbol{\rho}, \boldsymbol{\beta}) = \arg\min_{\boldsymbol{\theta}} \sum_{j=1}^{M} \beta_{\theta_j} f_{\theta_j}(\mathbf{h}, \boldsymbol{\rho}, \boldsymbol{\theta}).
\end{align}
A brute-force search over all $M!$ decoding orders is prohibitive; however, the optimal order can be found efficiently.

\begin{lemma}\label{lm:Lemma_decoding}
Given Lagrange multipliers $\boldsymbol{\beta}$, transmission rate vector $\boldsymbol{\rho}$, and channel state $\mathbf{h}$, the optimal decoding order $\boldsymbol{\theta}^*(\mathbf{h}, \boldsymbol{\rho}, \boldsymbol{\beta}) = (\theta_1^*, \ldots, \theta_M^*)$ that solves~\eqref{eq:decoding_opt} satisfies
\begin{align*}
\frac{h_{\theta_1^*} \mathbb{I}\{\rho_{\theta_1^*} > 0\}}{\beta_{\theta_1^*}} \geq \frac{h_{\theta_2^*} \mathbb{I}\{\rho_{\theta_2^*} > 0\}}{\beta_{\theta_2^*}} \geq \cdots \geq \frac{h_{\theta_M^*} \mathbb{I}\{\rho_{\theta_M^*} > 0\}}{\beta_{\theta_M^*}}.
\end{align*}
\end{lemma}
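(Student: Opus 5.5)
The plan is a classical \emph{adjacent-interchange} (exchange) argument applied directly to the weighted sum power in~\eqref{eq:decoding_opt}, using the closed-form SIC powers~\eqref{eq:POW_SIC}. Write $a_i \triangleq g^{-1}(\rho_i)\ge 0$, with $a_i=0$ if and only if $\rho_i=0$. The weighted cost of a decoding order $\boldsymbol{\theta}$ is then
\begin{align*}
J(\boldsymbol{\theta}) = \sum_{j=1}^{M} \frac{\beta_{\theta_j}}{h_{\theta_j}}\, a_{\theta_j} \prod_{k=j+1}^{M}\left(1+a_{\theta_k}\right).
\end{align*}
I will compare $\boldsymbol{\theta}$ with the order $\boldsymbol{\theta}'$ obtained by swapping the users at positions $j$ and $j+1$. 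Call these users $u=\theta_j$ and $v=\theta_{j+1}$.

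\textbf{Step 1 (locality of a swap).} First I will check that all terms other than those of $u$ and $v$ are unchanged by the swap. A term at position $\ell<j$ contains the product $\prod_{k>\ell}(1+a_{\theta_k})$. This product includes both factors $(1+a_u)$ and $(1+a_v)$, and multiplication is commutative, so the term is unchanged. A term at position $\ell>j+1$ does not involve $u$ or $v$ at all. Set $C\triangleq\prod_{k=j+2}^{M}(1+a_{\theta_k})>0$. The two affected contributions are $C\big[\tfrac{\beta_u}{h_u}a_u(1+a_v)+\tfrac{\beta_v}{h_v}a_v\big]$ under $\boldsymbol{\theta}$ and $C\big[\tfrac{\beta_v}{h_v}a_v(1+a_u)+\tfrac{\beta_u}{h_u}a_u\big]$ under $\boldsymbol{\theta}'$.

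\textbf{Step 2 (sign of the difference).} Subtracting the two contributions gives
\begin{align*}
J(\boldsymbol{\theta})-J(\boldsymbol{\theta}') = C\,a_u a_v\left(\frac{\beta_u}{h_u}-\frac{\beta_v}{h_v}\right).
\end{align*}
If both users are active ($a_u,a_v>0$), decoding $u$ before $v$ is no worse exactly when $h_u/\beta_u \ge h_v/\beta_v$. If either user is inactive, the difference is zero and the swap is cost-neutral. This is consistent with assigning key $h_i\mathbb{I}\{\rho_i>0\}/\beta_i=0$ to inactive users and placing them last. For $\beta_i=0$, I will use the convention that the key is $+\infty$ for active users. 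Such a user contributes nothing to the cost and is placed first, and the same difference formula confirms that this placement is weakly optimal.

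\textbf{Step 3 (exchange argument).} Take any minimizer $\boldsymbol{\theta}^\circ$ of~\eqref{eq:decoding_opt}; one exists because the set of orders is finite. Whenever two adjacent users violate the stated nonincreasing order of $h_i\mathbb{I}\{\rho_i>0\}/\beta_i$, Step 2 shows that swapping them does not increase $J$. This holds both when both users are active and when one of them is inactive. Applying bubble sort, which uses finitely many such swaps, transforms $\boldsymbol{\theta}^\circ$ into the sorted order without increasing the cost. Hence the sorted order attains the minimum, and ties in the key can be broken arbitrarily. Computing the keys and sorting them takes $\mathcal{O}(M\log M)$ time.

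The main obstacle is Step 1. It requires correctly reading the interference structure in~\eqref{eq:POW_SIC}, where the user at stage $j$ is penalized only by users decoded after it, and verifying that a swap leaves all other terms invariant. A secondary difficulty is handling the degenerate cases $\rho_i=0$ and $\beta_i=0$ cleanly, so that the indicator-based ordering in the lemma is well defined.
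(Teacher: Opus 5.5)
Your proof is correct and follows the same adjacent-interchange argument as the paper: you derive the same swap identity $J(\boldsymbol{\theta})-J(\boldsymbol{\theta}') = C\,a_u a_v(\beta_u/h_u-\beta_v/h_v)$ and conclude by repeated adjacent swaps. Your handling of inactive users is more accurate than the paper's Case~2 remark that placing them early raises interference: since $1+g^{-1}(0)=1$, swaps involving them are exactly cost-neutral, so the sorted order is \emph{an} optimal order rather than the only one.
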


\begin{proof}
See Appendix~B.
\end{proof}

\begin{remark}
By Lemma~\ref{lm:Lemma_decoding}, the optimal decoding order is obtained by sorting the ratios $h_i \mathbb{I}\{\rho_i > 0\} / \beta_i$ in non-increasing order. This sorting operation has computational complexity $\mathcal{O}(M \log M)$ for each $(\mathbf{h}, \boldsymbol{\rho})$ pair, which is significantly lower than the $\mathcal{O}(M!)$ complexity of brute-force search.  
\end{remark}

Let $\boldsymbol{f}^*(\mathbf{h},\boldsymbol{\rho},\boldsymbol{\beta})$ denote the optimal NOMA-SIC power allocation, where the optimal decoding order depends on $\boldsymbol{\beta}$. Since~\eqref{eq:power_min} is independent of $(\boldsymbol{\mu},\boldsymbol{\eta})$, the optimal decoding order and power allocation are identical for all feasible $(\boldsymbol{\mu},\boldsymbol{\eta})$.
 Given any $\boldsymbol{\mu}$ and $\boldsymbol{\beta}$, we obtain the corresponding  $\boldsymbol{\pi}^*(\boldsymbol{\mu}, \boldsymbol{\beta})$ via
\begin{align*}
\pi_i^*(\mathbf{h}, \boldsymbol{\rho}) = \mu(\mathbf{h}, \boldsymbol{\rho}) \cdot f_i^*(\mathbf{h}, \boldsymbol{\rho}, \boldsymbol{\beta}),
\end{align*}
for all $i \in \mathcal{M}$ and $(\mathbf{h}, \boldsymbol{\rho})$.

\paragraph{Scheduling Probability and Auxiliary Variable Optimization}

Given the optimal power allocations $\boldsymbol{\pi}^*(\boldsymbol{\mu}, \boldsymbol{\beta})$, we optimize over $\boldsymbol{\mu}$ and $\boldsymbol{\eta}$. Substituting $\pi^*_i(\mathbf{h}, \boldsymbol{\rho})=\mu(\mathbf{h}, \boldsymbol{\rho}) f_i^*(\mathbf{h}, \boldsymbol{\rho}, \boldsymbol{\beta})$ into~\eqref{eq:Lagrangian_unified}, and using the
definitions of $p_i(\boldsymbol{\mu})$, $D'_i(\boldsymbol{\mu})$, and $\bar{D}'_i(\eta_i)$ yields
\begin{align}\label{eq:Lagrangian_expanded}
&L(\boldsymbol{\beta}, \boldsymbol{\alpha}, \boldsymbol{\nu}, \boldsymbol{\mu}, \boldsymbol{\pi}^*(\boldsymbol{\mu}), \boldsymbol{\eta}) \nonumber \\
&= \sum_{i=1}^{M} \left(w_i \lambda_i \eta_i + \frac{\nu_i}{\eta_i} - \frac{\alpha_i\bar{D}_i  (1-\lambda_i)}{\eta_i}\right) \nonumber\\
&\quad + \sum_{\mathbf{h} \in \boldsymbol{\mathcal{H}}} \sum_{\boldsymbol{\rho} \in \boldsymbol{\mathcal{R}}} \mathbb{P}(\mathbf{h}) \mu(\mathbf{h}, \boldsymbol{\rho}) \Bigg[\sum_{i=1}^{M} \Big(\beta_i f_i^*(\mathbf{h}, \boldsymbol{\rho}, \boldsymbol{\beta}) \nonumber\\
&\quad + \alpha_i \lambda_i \delta(\rho_i) \mathbb{I}\{\rho_i > 0\} - \nu_i \mathbb{I}\{\rho_i > 0\}\Big)\Bigg] + C,
\end{align}
where $C = -\sum_{i=1}^{M} (w_i \lambda_i + \beta_i \bar{P}_i + \alpha_i \bar{D}_i \lambda_i)$ is a constant independent of $\boldsymbol{\mu}$ and $\boldsymbol{\eta}$.  The dual function is
\begin{align*}
z(\boldsymbol{\beta}, \boldsymbol{\alpha}, \boldsymbol{\nu}) = \min_{\boldsymbol{\mu},\boldsymbol{\eta}} \; &L(\boldsymbol{\beta}, \boldsymbol{\alpha}, \boldsymbol{\nu}, \boldsymbol{\mu}, \boldsymbol{\pi}^*(\boldsymbol{\mu}, \boldsymbol{\beta}), \boldsymbol{\eta}),\\
\text{subject to} \;\; &\eqref{eq:mu_bound}, \; \eqref{eq:mu_sum}.\nonumber 
\end{align*}
As~\eqref{eq:Lagrangian_expanded} is separable in $\boldsymbol{\mu}$ and $\boldsymbol{\eta}$, the minimization decouples.

\textit{Optimization over $\boldsymbol{\mu}$:} 
The terms involving $\mu(\mathbf{h}, \boldsymbol{\rho})$ in~\eqref{eq:Lagrangian_expanded} further decompose across channel states $\mathbf{h}$,  since the constraints $0 \leq \mu(\mathbf{h}, \boldsymbol{\rho}) \leq 1,\; \forall \mathbf{h}, \boldsymbol{\rho}$ and $\sum_{\boldsymbol{\rho}} \mu(\mathbf{h}, \boldsymbol{\rho}) = 1,\; \forall \mathbf{h}$ are imposed independently for each $\mathbf{h}$. Hence, for each $(\mathbf{h}, \boldsymbol{\rho})$, $\mu^*(\mathbf{h}, \boldsymbol{\rho})$ is optimal if and only if it solution to the following subproblem
\begin{align}\label{eq:mu_h_subproblem}
\underset{\mu(\mathbf{h}, \boldsymbol{\rho}), \forall \boldsymbol{\rho}}{\text{minimize}} \; &\sum_{\boldsymbol{\rho} \in \boldsymbol{\mathcal{R}}} \mu(\mathbf{h}, \boldsymbol{\rho}) A(\mathbf{h}, \boldsymbol{\rho}),\\
\text{subject to} \;\; &\sum_{\boldsymbol{\rho} \in \boldsymbol{\mathcal{R}}} \mu(\mathbf{h}, \boldsymbol{\rho}) = 1, \quad 0 \leq \mu(\mathbf{h}, \boldsymbol{\rho}) \leq 1, \; \forall \boldsymbol{\rho}.\nonumber
\end{align}
Here $A(\mathbf{h}, \boldsymbol{\rho}) = \sum_{i=1}^{M} \Big(\beta_i f_i^*(\mathbf{h}, \boldsymbol{\rho}, \boldsymbol{\beta}) + \alpha_i \lambda_i \delta(\rho_i) \mathbb{I}\{\rho_i > 0\} - \nu_i \mathbb{I}\{\rho_i > 0\}\Big)$. Problem~\eqref{eq:mu_h_subproblem} is a linear program with a closed-form solution.
Let $J^*(\mathbf{h}) = \{\boldsymbol{\rho} \in \boldsymbol{\mathcal{R}} : A(\mathbf{h}, \boldsymbol{\rho}) = \min_{\boldsymbol{\rho}'} A(\mathbf{h}, \boldsymbol{\rho}')\}$ denote the set of minimizers. Then, for each $(\mathbf{h}, \boldsymbol{\rho})$ pair, the optimal $\mu^*(\mathbf{h}, \boldsymbol{\rho})$ is 
\begin{align}\label{eq:mu_optimal}
\mu^*(\mathbf{h}, \boldsymbol{\rho}) = 
\begin{cases}
1, & \text{if } |J^*(\mathbf{h})| = 1 \text{ and } \boldsymbol{\rho} \in J^*(\mathbf{h}),\\
1/|J^*(\mathbf{h})|, & \text{if } |J^*(\mathbf{h})| > 1 \text{ and } \boldsymbol{\rho} \in J^*(\mathbf{h}),\\
0, & \text{otherwise}.
\end{cases}
\end{align}
When multiple minimizers exist, $\mu^*(\mathbf{h}, \boldsymbol{\rho})$ is chosen as the uniform distribution over $J^*(\mathbf{h})$.

\textit{Optimization over $\boldsymbol{\eta}$:}
The terms in~\eqref{eq:Lagrangian_expanded} involving $\eta_i$ are separable across $i$. Minimizing with respect to $\eta_i$ yields
\begin{align*}
\frac{\partial L}{\partial \eta_i} = w_i \lambda_i   -\frac{\nu_i - \alpha_i \bar{D}_i (1-\lambda_i)}{\eta_i^2} = 0.
\end{align*}
Solving for $\eta_i$ gives
\begin{align*}
\eta_i^* = \sqrt{\frac{\nu_i - \alpha_i \bar{D}_i (1-\lambda_i)}{w_i \lambda_i}}.
\end{align*}

For given $(\boldsymbol{\beta},\boldsymbol{\alpha},\boldsymbol{\nu})$, the optimal primal variables
$(\boldsymbol{\mu}^*,\boldsymbol{\pi}^*,\boldsymbol{\eta}^*)$ are obtained; arguments are omitted for brevity.

The dual problem is therefore
\begin{align}\label{eq:dual_problem_2}
(\boldsymbol{\beta}^*, \boldsymbol{\alpha}^*, \boldsymbol{\nu}^*)
= \arg\max_{\boldsymbol{\beta} \geq 0,\;
\boldsymbol{\alpha} \geq 0,\;
\boldsymbol{\nu} \geq 0}
z(\boldsymbol{\beta}, \boldsymbol{\alpha}, \boldsymbol{\nu}),
\end{align}
where $z(\boldsymbol{\beta},\boldsymbol{\alpha},\boldsymbol{\nu})
= L(\boldsymbol{\beta},\boldsymbol{\alpha},\boldsymbol{\nu},
\boldsymbol{\mu}^*,\boldsymbol{\pi}^*,\boldsymbol{\eta}^*)$.
By strong duality, $V_{\rm SRP}=z(\boldsymbol{\beta}^*,\boldsymbol{\alpha}^*,\boldsymbol{\nu}^*)$, and the corresponding Lagrangian minimizers solve~\eqref{eq:SRP_pi}. Next, we describe a subgradient ascent algorithm to solve~\eqref{eq:dual_problem_2}.

\paragraph{Subgradient Ascent Algorithm for Solving the Dual Problem}
The dual function $z(\boldsymbol{\beta},\boldsymbol{\alpha},\boldsymbol{\nu})$ is concave but generally nonsmooth~\cite{Boyd}; it is not clear that it is differentiable everywhere. Hence, we employ a subgradient ascent method~\cite{Shor1985,Bertsekas2016}.

The subgradients of the dual function with respect to the dual variables
are given by
\begin{align*}
\zeta_{\beta_i} &= \left(\sum_{\boldsymbol{\rho} \in \boldsymbol{\mathcal{R}}}
\mathbb{E}_{\mathbf{h}}\!\left[\pi_i(\mathbf{h}, \boldsymbol{\rho})\right]
- \bar{P}_i\right) \in \partial_{\beta_i} z(\boldsymbol{\beta}, \boldsymbol{\alpha}, \boldsymbol{\nu}), \\
\zeta_{\alpha_i} &= \left(D'_i(\boldsymbol{\mu}) - \bar{D}'_i(\eta_i)\right) \in \partial_{\alpha_i} z(\boldsymbol{\beta}, \boldsymbol{\alpha}, \boldsymbol{\nu}), \\
\zeta_{\nu_i} &= \left(\frac{1}{\eta_i} - p_i(\boldsymbol{\mu})\right) \in \partial_{\nu_i} z(\boldsymbol{\beta},  \boldsymbol{\alpha}, \boldsymbol{\nu}),
\end{align*}
for all $i \in \mathcal{M}$, where $\partial z(\cdot)$ denotes the set of subgradients (i.e., the subdifferential) of the dual function $z(\cdot)$ at the dual variables.
These subgradients correspond to violations of the power, distortion, and linking constraints, respectively. Starting from any nonnegative initial dual variables
$(\boldsymbol{\beta}_0, \boldsymbol{\alpha}_0, \boldsymbol{\nu}_0)$,
the dual variables are updated via subgradient ascent iterations
\begin{subequations}\label{eq:subgradient_update}
\begin{align}
\beta_{i, k+1} &= \left[\beta_{i,k} + s_k \left(\sum_{\boldsymbol{\rho} \in \boldsymbol{\mathcal{R}}} \mathbb{E}_{\mathbf{h}}[\pi_{i,k}^{*}(\mathbf{h}, \boldsymbol{\rho})] - \bar{P}_i\right)\right]^+,\label{eq:beta_update}\\
\alpha_{i,k+1} &= \left[\alpha_{i,k} + s_k \left(D'_i(\boldsymbol{\mu}^*_k) - \bar{D}'_i(\eta^*_{i,k})\right)\right]^+, \label{eq:alpha_update}\\
\nu_{i,k+1} &= \left[\nu_{i,k} + s_k \left(\frac{1}{\eta^*_{i,k}} - p_i(\boldsymbol{\mu}^*_k)\right)\right]^+, \label{eq:nu_update}
\end{align}
\end{subequations}
for all $i \in \mathcal{M}$ and $k$. 
Here, $k$ is the iteration index, $[\cdot]^+=\max\{0,\cdot\}$, and $s_k>0$ is a diminishing step size satisfying $\sum_k s_k=\infty$ and $\sum_k s_k^2<\infty$ (e.g., $s_k=1/k$).

At each iteration $k$, the primal variables $(\boldsymbol{\mu}^*_k,\boldsymbol{\pi}^*_k,\boldsymbol{\eta}^*_k)$ minimize the Lagrangian for $(\boldsymbol{\beta}_k,\boldsymbol{\alpha}_k,\boldsymbol{\nu}_k)$. Under the above step-size conditions, the iterates converge to $(\boldsymbol{\beta}^*,\boldsymbol{\alpha}^*,\boldsymbol{\nu}^*)$, and by strong duality the corresponding Lagrangian minimizers solve~\eqref{eq:SRP_pi}~\cite{Boyd,Shor1985,Bertsekas2016}.

To evaluate the terms $1/\eta^*_{i,k}$ appearing in \eqref{eq:nu_update} and
\eqref{eq:alpha_update} (through $\bar{D}'_i(\eta^*_{i,k})$), it is necessary to
ensure that $\eta^*_{i,k}>0$ for all $i$. To this end, we employ an
$\epsilon_r$-regularized primal update to avoid degeneracy of $\eta^*_{i,k}$,
where $\epsilon_r>0$ is a small regularization parameter. Specifically,
$\eta^*_{i,k}$ is computed as
\begin{align}\label{eq:eta_opt_reg}
\eta^*_{i,k}
=
\sqrt{
\frac{
\max\!\left\{
\nu_{i,k} - \alpha_{i,k}\,\bar{D}_i(1-\lambda_i),\;
\epsilon_r
\right\}
}{
w_i \lambda_i
}
}, \quad \forall i, k.
\end{align}

For any fixed $\epsilon_r>0$, the regularized dual problem remains convex and converges under standard diminishing step sizes, with an optimality gap that vanishes as $\epsilon_r\!\to\!0$. Alternatively, a vanishing regularization $\epsilon_{r,k} \propto s_k$ preserves convergence to the unregularized optimum.

Algorithm~\ref{alg:SRP} summarizes the procedure for solving~\eqref{eq:SRP_pi}.

\begin{algorithm}[t]
\caption{Subgradient-based solution of~\eqref{eq:SRP_pi}}
\label{alg:SRP}
\begin{algorithmic}[1]
\Require $M$, $r_{\rm max}$, $(\lambda_i, w_i, \bar{P}_i, \bar{D}_i)_{i \in \mathcal{M}}$, $\boldsymbol{\mathcal{H}}$, $\mathbb{P}(\mathbf{h})$ for all $\mathbf{h} \in \boldsymbol{\mathcal{H}}$, tolerance $\epsilon > 0$, $\epsilon_r > 0 $
    
\State Initialize iteration index $k = 0$, dual variables $\boldsymbol{\beta}_k \geq \mathbf{0}$, $\boldsymbol{\alpha}_k \geq \mathbf{0}, \boldsymbol{\nu}_k  \geq \mathbf{0}$

\Repeat 
    \State \parbox[t]{\dimexpr\linewidth-\algorithmicindent}{For each $(\mathbf{h},\boldsymbol{\rho})$,
         compute optimal decoding order $\boldsymbol{\theta}^*(\mathbf{h},\boldsymbol{\rho},\boldsymbol{\beta}_{k})$ 
        by sorting $(h_i \mathbb{I}\{\rho_i > 0\}/\beta_{i,k})_{i \in \mathcal{M}}$ 
        in non-increasing order,
    and obtain the optimal power allocation $\boldsymbol{f}^*(\mathbf{h},\boldsymbol{\rho},\boldsymbol{\beta}_{k})$ 
        using~\eqref{eq:POW_SIC}   \label{step:loop}.}

    \State For each $(\mathbf{h}, \boldsymbol{\rho})$, compute $\mu_{k}^*(\mathbf{h}, \boldsymbol{\rho})$ using \eqref{eq:mu_optimal}.
    \State \parbox[t]{\dimexpr\linewidth-\algorithmicindent}{For each $i \in \mathcal{M}$, compute $p_i(\boldsymbol{\mu}_k^*)$, $D'_i(\boldsymbol{\mu}_k^{*})$, $\eta^*_{i, k}$ using \eqref{eq:eta_opt_reg}, and $\bar{D}'_i(\eta_{i,k}^*)$.}
    \State \parbox[t]{\dimexpr\linewidth-\algorithmicindent}{For each $i \in \mathcal{M}$, update $\beta_{i, k+1}, \alpha_{i, k+1}$ and $\nu_{i,k+1}$ using~\eqref{eq:subgradient_update}.}
    \State $k \leftarrow k + 1$
        
    \Until{$|z(\boldsymbol{\beta}_k, \boldsymbol{\alpha}_k, \boldsymbol{\nu}_k) - z(\boldsymbol{\beta}_{k+1}, \boldsymbol{\alpha}_{k+1}, \boldsymbol{\nu}_{k+1})| < \epsilon$}

    \State At convergence, set $\boldsymbol{\beta}^* \leftarrow \boldsymbol{\beta}_k$, $\boldsymbol{\alpha}^* \leftarrow \boldsymbol{\alpha}_{k}, \boldsymbol{\nu}^* \leftarrow \boldsymbol{\nu}_{k}$, and obtain $\boldsymbol{\mu}^*(\boldsymbol{\beta}^*, \boldsymbol{\alpha}^*, \boldsymbol{\nu}^*)$ using \eqref{eq:mu_optimal}.
    \State For each $i, (\mathbf{h}, \boldsymbol{\rho})$, compute $\pi_i^*(\mathbf{h}, \boldsymbol{\rho}) = \mu^*(\mathbf{h}, \boldsymbol{\rho}) \cdot f_i^*(\mathbf{h}, \boldsymbol{\rho}, \boldsymbol{\beta}^*)$, and thereby obtain $\boldsymbol{\pi}^*(\boldsymbol{\beta}^*, \boldsymbol{\alpha}^*, \boldsymbol{\nu}^*)$.

\Ensure Optimal solutions for \eqref{eq:SRP_pi}: $ \boldsymbol{\mu}^*(\boldsymbol{\beta}^*, \boldsymbol{\alpha}^*, \boldsymbol{\nu}^*)$ and $\boldsymbol{\pi}^*(\boldsymbol{\beta}^*, \boldsymbol{\alpha}^*, \boldsymbol{\nu}^*)$. Obtain $\boldsymbol{F}^*$ via \eqref{eq:F_from_pi}.
\end{algorithmic}
\end{algorithm}

\begin{remark}[Dual Variable Convergence and Averaging]\label{rem:averaging}
When the dual problem~\eqref{eq:dual_problem_2} admits multiple optimal
solutions, the subgradient method may not converge to a unique dual
point; instead, the dual iterates may oscillate among optimal solutions
while attaining the optimal dual value. In such cases, the primal
solutions corresponding to individual dual iterates may violate some
primal constraints, no single dual
solution necessarily enforces all primal constraints simultaneously;
rather, appropriate convex combinations of the associated 
solutions ($\boldsymbol{\mu}^*, \boldsymbol{\pi^*}$) are required.
This behavior is well known in dual decomposition and subgradient-based
methods~\cite{Nedic2009}. 

Nevertheless, under standard diminishing step-size conditions, the \emph{ergodic (time-averaged) primal iterates}
\begin{align*}
\bar{\boldsymbol{\mu}}_L &= \frac{1}{L}\sum_{l=1}^{L} \boldsymbol{\mu}^*(\boldsymbol{\beta}^*_l, \boldsymbol{\alpha}^*_l, \boldsymbol{\nu}^*_l),\;\;
\bar{\boldsymbol{\pi}}_L = \frac{1}{L}\sum_{l=1}^{L} \boldsymbol{\pi}^*(\boldsymbol{\beta}^*_l, \boldsymbol{\alpha}^*_l, \boldsymbol{\nu}^*_l),
\end{align*}  
are guaranteed to converge to the optimal primal solutions and satisfy all
primal constraints, including MAC constraints~\cite{Nedic2009}.
Here $L$ is the averaging window (e.g., $L = 500$).
Specifically, once the dual iterates are deemed to have converged in the
sense that $|z(\boldsymbol{\beta}_k, \boldsymbol{\alpha}_k, \boldsymbol{\nu}_k) - z(\boldsymbol{\beta}_{k+1}, \boldsymbol{\alpha}_{k+1}, \boldsymbol{\nu}_{k+1})| < \epsilon$ the subgradient ascent algorithm is run for an additional $L$ iterations,
and the ergodic primal solutions are computed.
\end{remark}

The averaged power allocations obtained from
$(\bar{\boldsymbol{\mu}}_L, \bar{\boldsymbol{\pi}}_L)$ satisfy the MAC
constraints~\eqref{eq:MAC} but may not correspond to a single SIC decoding
order. To retain SIC-based decoding, we record the empirical frequency of each decoding order over the averaging window and construct a randomized policy by defining scheduling probabilities according to these frequencies. The policy then selects a rate vector and a decoding order, for which the corresponding power allocation is computed via SIC using~\eqref{eq:POW_SIC}.

\begin{remark}
The proposed dual decomposition approach offers substantial computational
advantages over directly solving~\eqref{eq:SRP_pi}, which would require
handling $|\boldsymbol{\mathcal{H}}| \times |\boldsymbol{\mathcal{R}}|
\times (2^M - 1)$ MAC constraints, exponential in the number of users.
By exploiting the SIC structure, the power allocation is obtained
analytically, completely eliminating these constraints. Moreover,
Lemma~\ref{lm:Lemma_decoding} reduces the decoding order optimization from
factorial complexity $\mathcal{O}(M!)$ to $\mathcal{O}(M\log M)$ per
$(\mathbf{h},\boldsymbol{\rho})$ via sorting.

The algorithm computes $\boldsymbol{\mu}$ and $\boldsymbol{\eta}$ in
closed form using~\eqref{eq:mu_optimal} and~\eqref{eq:eta_opt_reg}, avoiding
iterative convex optimization. For each $\mathbf{h}$, evaluating
$A(\mathbf{h},\boldsymbol{\rho})$ over
$\boldsymbol{\rho}\in\boldsymbol{\mathcal{R}}$ incurs complexity
$\mathcal{O}(|\boldsymbol{\mathcal{R}}|\,M)$, while computing
$\boldsymbol{\eta}^*$ requires $\mathcal{O}(M)$. 
\end{remark}

\subsubsection{VA-SRP under TDMA Scheme}
We also solve~\eqref{eq:SRP_pi} under a TDMA constraint, where at most one user
transmits $\rho>0$ bits in any slot. This is enforced by restricting the
rate vector $\boldsymbol{\rho}$ to the set
\begin{align*}
   S \triangleq \{\mathbf{0}\} \;\cup\; \bigcup_{i=1}^{M}
\left\{ \rho\,\mathbf{e}_i \;\middle|\; \rho \in \{1,2,\ldots,r_{\max}\} \right\}, 
\end{align*}
instead of $\boldsymbol{\mathcal{R}}$.
Here, $\mathbf{e}_i$ denotes the $i^{\rm th}$ standard basis vector in $\mathbb{R}^M$,
and $\mathbf{0}$ is the all-zero vector. Thus, $S$ includes either no
transmission or exactly one active user transmitting between $1$ and
$r_{\max}$ bits, ensuring the TDMA constraint.

\subsection{Performance Bound on VA-SRP}

We now establish a provable approximation guarantee for the VA-SRP solution.

\begin{theorem}\label{thm:Bound}
The optimal objective value of~\eqref{eq:SRP_pi},  $V_{\rm SRP}$, is at most twice the optimal objective value of~\eqref{eq:main-opt-problem}, $V_{\rm opt}$,  i.e., $V_{\rm SRP} \leq 2V_{\rm opt}$.
\end{theorem}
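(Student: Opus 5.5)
The plan is the standard lower-bound-plus-feasibility argument used for stationary randomized policies in AoI problems, adapted to VAoI. First, take an arbitrary (possibly VAoI-aware, history-dependent) feasible policy $\phi$ for \eqref{eq:main-opt-problem}. Its long-run empirical frequencies define a VA-SRP candidate. Concretely, let $\hat\mu(\mathbf{h},\boldsymbol{\rho})$ be the limiting fraction of slots with channel state $\mathbf{h}$ in which $\phi$ chooses $\boldsymbol{\rho}$. Let $\hat f_i(\mathbf{h},\boldsymbol{\rho})$ be the corresponding conditional average power, taking limits along a subsequence if needed. Because \eqref{eq:MAC_pi} is preserved under averaging (it is linear in $(\mu,\pi)$ after the change of variables \eqref{eq:pi_def}), the pair $(\hat{\boldsymbol\mu},\hat{\boldsymbol\pi})$ satisfies \eqref{eq:MAC_pi}, \eqref{eq:mu_bound}, \eqref{eq:mu_sum}. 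It also meets the power constraint \eqref{eq:pow_pi} with the same $\bar P_i$. Let $\hat p_i=p_i(\hat{\boldsymbol\mu})$ denote the long-run fraction of slots in which User~$i$ transmits under $\phi$.

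Second, I lower bound the VAoI of $\phi$ in terms of $\hat p_i$. Under \eqref{eq:age_evolution}, the VAoI after a transmission resets to $0$. Between consecutive transmissions, separated by an interval of length $X$, it grows by Bernoulli$(\lambda_i)$ arrivals. Arrivals are independent of the policy's past decisions, so the expected sum of $\Delta_i$ over such an interval is $\lambda_i\,\mathbb{E}[X(X-1)/2]$, up to conditioning subtleties handled by a renewal-reward / Wald-type argument. The number of intervals per slot is $\hat p_i$ and the mean interval length is $1/\hat p_i$. Applying Jensen, $\mathbb{E}[X^2]\ge(\mathbb{E}X)^2$, gives
\begin{align*}
\lim_{T\to\infty}\frac1T\sum_{t}\mathbb{E}[\Delta_i(t)] \;\ge\; \frac{\lambda_i}{2}\Big(\frac{1}{\hat p_i}-1\Big).
\end{align*}
Summing with weights shows $V_\phi\ge \tfrac12\sum_i w_i\lambda_i(1/\hat p_i-1)$, which is exactly half the VA-SRP objective \eqref{eq:obj_pi} evaluated at $\hat{\boldsymbol\mu}$. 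Combined with \eqref{eq:SRP_VAoI}, minimizing over $\phi$ yields $V_{\rm SRP}\le 2V_{\rm opt}$, provided $(\hat{\boldsymbol\mu},\hat{\boldsymbol\pi})$ is feasible for \eqref{eq:SRP_pi}.

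Third, and this is the step I expect to be the main obstacle, I must verify the distortion constraint \eqref{eq:dist_pi} for $\hat{\boldsymbol\mu}$. Under $\phi$, distortion is incurred only in transmitting slots with a pending packet. The average distortion of $\phi$ is therefore the frequency of slots with $Q_i=1$ and $\rho_i>0$, weighted by $\delta(\rho_i)$. The SRP formula, by contrast, multiplies the per-slot rate by the stationary $\mathbb{P}(Q_i=1)$ from \eqref{eq:Q_prob}, so the two quantities are not identical. My first attempt is to argue that without loss of optimality $\phi$ never transmits with an empty queue. Such a transmission wastes power and does not affect $\Delta_i$, since $\Delta_i$ is already $0$ when $Q_i=0$ after a prior reset. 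I would then redefine $\hat{\boldsymbol\mu}$ using only "useful" transmissions and compare $\lambda_i\,\mathbb{E}[\delta\,\mathbb{I}\{\rho_i>0\}]$ with $\bar D_i(\lambda_i+(1-\lambda_i)\hat p_i)$ through the identity that the rate of useful transmissions equals the rate of occupied-slot transmissions. If an exact match fails, I would fall back on bounding the SRP's occupancy probability by the empirical one of $\phi$, or on restricting attention to policies with the same occupancy structure. Throughout, I take the limits in \eqref{eq:main-opt-problem} to exist, or use $\limsup$/$\liminf$ appropriately.
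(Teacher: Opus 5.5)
Your outline is the paper's own argument in Appendix~C: a Jensen lower bound over inter-transmission intervals, then a VA-SRP built from empirical frequencies. Starting from an arbitrary feasible $\phi$ instead of $L_B$ and $\phi^{\rm LB}$ is, if anything, cleaner.

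\textbf{The gap is your second step, and it cannot be repaired.} The claim that the expected VAoI accumulated over an interval of length $X$ is $\lambda_i\mathbb{E}[X(X-1)/2]$ needs $X$ to be independent of the arrivals. But policies admissible in \eqref{eq:main-opt-problem} may observe $Q_i(t)$ (the paper's greedy baseline does), and $A_i(t)$ is revealed before the slot-$t$ decision. So $X$ can be a stopping time of the arrival process, and no renewal or Wald argument helps.

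Here is a concrete instance. Take $M=1$, a single channel state $h$, $r_{\max}=1$, $g(x)=\log_2(1+x)$ (one bit costs power $1/h$), $\bar P=\lambda/h$ and $\bar D\ge\delta(1)$. Let $\phi$ transmit if and only if $Q(t)=1$.
\begin{itemize}
\item Then $Q(t)=A(t)$, and by \eqref{eq:age_evolution} $\Delta(t)\equiv 0$.
\item The average power is $\lambda/h$ and the average distortion is $\lambda\delta(1)$, so $\phi$ is feasible with $\hat p=\lambda$.
\item Yet its VAoI is $0<\frac{\lambda}{2}\bigl(\frac{1}{\lambda}-1\bigr)$, contradicting your lower bound.
\item In the same instance, \eqref{eq:MAC_pi} and \eqref{eq:pow_pi} force $\mu(h,1)\le\lambda$, while \eqref{eq:dist_pi} holds for every $\mu$. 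Hence $V_{\rm SRP}=1-\lambda>0=2V_{\rm opt}$ for every $\lambda\in(0,1)$.
\end{itemize}
So the inequality as stated is false. The optimal policy spends power only in slots with a fresh arrival, whereas \eqref{eq:pow_pi} charges every scheduled slot. The paper's Step~1 asserts the same per-interval formula $\lambda(I^2-I)/2$ for arbitrary policies, so following the paper does not help. The bound is guaranteed only for policies whose decisions are independent of the arrival processes. Even then, \eqref{eq:pow_pi} is inherited only if power is counted in every scheduled slot.

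\textbf{Your third step is more salvageable than you think, but it pulls against the second.} Dropping transmissions from empty queues is without loss for \eqref{eq:main-opt-problem}. For $\mathcal S\ni i$, constraint \eqref{eq:MAC} becomes the old one for $\mathcal S\setminus\{i\}$, and VAoI and distortion are unchanged. After this reduction, every transmission finds $Q_i(t)=1$, so $\phi$'s average distortion equals $\bar\delta_i\triangleq\sum_{\mathbf h,\boldsymbol\rho}\mathbb P(\mathbf h)\hat\mu(\mathbf h,\boldsymbol\rho)\delta(\rho_i)\mathbb I\{\rho_i>0\}\le\bar D_i$. Hence $\lambda_i\bar\delta_i\le\lambda_i\bar D_i\le\bar D_i(\lambda_i+(1-\lambda_i)\hat p_i)$, which is \eqref{eq:dist_pi}; no occupancy matching is needed. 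This is sounder than the paper, which applies \eqref{eq:Q_prob} to the non-stationary, queue-aware $\phi^{\rm LB}$.

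But this reduction produces exactly the arrival-correlated policies for which your second step fails; the counterexample above is such a policy. Conversely, in the arrival-oblivious class, copying frequencies does not preserve distortion. Occupancy at transmission slots there depends on the gap pattern, not only on $\hat p_i$. For example, take the period-$3$ schedule ``send $\rho$ with $\delta(\rho)=0$, then $\rho'$ with $\delta(\rho')=1$, then idle''. It has average distortion $\lambda/3$, while its empirical-frequency VA-SRP has $\lambda/(\lambda+2)>\lambda/3$ for $\lambda<1$. So a corrected statement needs both a restricted policy class and a distortion argument that goes beyond copying frequencies.
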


\begin{proof}
We follow a standard lower-bounding and policy-construction argument,
adapted from~\cite{Gangadhar_VAoI}. See Appendix~C for details.
\end{proof}

\subsection{VA-SRP with Power Adjustment (VA-SRP w/ PA)} 

The proposed VA-SRP is both VAoI- and queue-agnostic, and may therefore schedule a user even when its queue is empty. While this does not affect VAoI (which is zero when the queue is empty), it leads to unnecessary power accounting in the SRP formulation. In practice, no power is consumed when a scheduled user has an empty queue. Consequently, the actual average power usage is lower than that imposed by the original SRP constraints. We therefore refer to the policy obtained from~\eqref{eq:SRP_pi} as \emph{VA-SRP without Power Adjustment (VA-SRP w/o PA)}. Next, we introduce refined power constraints and denote the corresponding policy as \emph{VA-SRP with Power Adjustment (VA-SRP w/ PA)}.

We define the instantaneous transmit power of User~$i$ in slot $t$ as
$P_i(t) = f_i(\mathbf{h}(t), \boldsymbol{\rho}(t))$ if $Q_i(t)=1$, and
$P_i(t)=0$ otherwise.  The expected per-slot transmit power is 
\begin{align*}
\mathbb{E}[P_i(t)] = \mathbb{E}[f_i(\mathbf{h}(t), \boldsymbol{\rho}(t)) \mid Q_i(t) = 1] \mathbb{P}(Q_i(t) = 1).
\end{align*}
Conditioned on $Q_i(t)=1$, the expected transmit power under VA-SRP is
\begin{align*}
\mathbb{E}[f_i(\mathbf{h}(t), \boldsymbol{\rho}(t)) \mid Q_i(t) = 1] = \sum_{\boldsymbol{\rho} \in \boldsymbol{\mathcal{R}}} \mathbb{E}_{\mathbf{h}}[\mu(\mathbf{h}, \boldsymbol{\rho}) f_i(\mathbf{h}, \boldsymbol{\rho})].
\end{align*}
Using the queue occupancy probability, $\mathbb{P}(Q_i(t) = 1)$, from~\eqref{eq:Q_prob}, the unconditional expected per-slot transmit power, 
\begin{align*}
\mathbb{E}[P_i(t)] = \frac{\lambda_i \sum_{\boldsymbol{\rho} \in \boldsymbol{\mathcal{R}}} \mathbb{E}_{\mathbf{h}}[\mu(\mathbf{h}, \boldsymbol{\rho}) f_i(\mathbf{h}, \boldsymbol{\rho})]}{\lambda_i(1-p_i(\boldsymbol{\mu})) + p_i(\boldsymbol{\mu})}, \; \forall i \in \mathcal{M}.
\end{align*}
Hence, the long-term average power under VA-SRP w/ PA is $P_i(\boldsymbol{\mu}, \boldsymbol{F}) = \mathbb{E}[P_i(t)]$ and the average power constraints are given by $P_i(\boldsymbol{\mu}, \boldsymbol{F}) \leq \bar{P}_i, \; \forall i \in \mathcal{M}$, which can be
equivalently expressed as 
\begin{align}\label{eq:const_with_PA}
&P'_i(\boldsymbol{\mu}, \boldsymbol{F}) 
    = \lambda_i \sum_{\boldsymbol{\rho} \in \boldsymbol{\mathcal{R}}} \mathbb{E}_{\mathbf{h}}[\mu(\mathbf{h}, \boldsymbol{\rho}) f_i(\mathbf{h}, \boldsymbol{\rho})] \nonumber\\
&\leq \bar{P}'_i(\boldsymbol{\mu}) = \bar{P}_i \left(\lambda_i - \lambda_ip_i(\boldsymbol{\mu}) + p_i(\boldsymbol{\mu})\right), \quad\forall i \in \mathcal{M}.
\end{align}
Replacing the power constraint~\eqref{eq:pow_pi} with~\eqref{eq:const_with_PA} in~\eqref{eq:SRP_pi} yields the VA-SRP w/ PA.

 \subsection{Heuristic Policies}
In this section, we describe several heuristic  policies.
\subsubsection{VAoI-Aware Greedy Policy}
In this policy, all transmission vectors $\boldsymbol{\rho}$ consistent with the current queue state are considered (i.e., $\rho_i=0$ if $Q_i=0$). For each $\boldsymbol{\rho}$, transmit powers are obtained by solving instantaneous power minimization problem subject to the MAC constraints. Among all rate vectors, only those that satisfy the running average power and distortion constraints are retained. The optimal transmission vector $\boldsymbol{\rho}^*$ is chosen to minimize the instantaneous total VAoI while maximizing the aggregate number of transmitted bits.

\subsubsection{Max-VAoI-First Policy (TDMA)}
This policy is a TDMA-restricted variant of the VAoI-aware Greedy policy. 
In each slot, at most one user is allowed to transmit, and the user with the
largest instantaneous VAoI is selected for transmission. 
The transmission rate is then chosen to satisfy the running average power and
distortion constraints while maximizing the number of transmitted bits.

\subsubsection{Round-Robin Policy}
This policy is VAoI-agnostic and selects users in a round-robin manner, irrespective of VAoI and queue states. Each user is given an
equal opportunity to be served in a cyclic order, and only the scheduled user is
allowed to transmit in a given time slot. For the selected user, $\rho^* \in \{0,1,\ldots,r_{\rm max}\}$ is chosen to satisfy the running average power and distortion constraints while maximizing the number of transmitted bits.

\section{Numerical Results}
This section presents numerical results. Unless otherwise stated, we use
$g(x)=\log_2(1+x)$ and $\delta(x)=(1-x/r_{\max})^2$.

\input{Plots/age_vs_P_D.tex}

\input{Plots/age_vs_lambda.tex}

\input{Plots/achievable_region.tex}

In Fig.~\ref{fig-Power_Dist}, we present the variation of the average VAoI under different scheduling policies. Fig.~\ref{fig-pow_vs_VAoI} illustrates how the long-term expected average VAoI varies with the bound on the average transmit power for users, $\bar{P}_i$, for both NOMA and TDMA schemes under various policies, including the proposed VA-SRP and several heuristic baselines.

As $\bar{P}_i$ increases, the average VAoI decreases and eventually saturates,
since higher power budgets enable more frequent status update deliveries.
Under NOMA, VA-SRP w/ PA consistently outperforms VA-SRP w/o PA by eliminating
redundant power usage when queues are empty, enabling more frequent transmissions.
VA-SRP w/ PA also outperforms Max-VAoI-First and Round-Robin policies and achieves
performance comparable to the VAoI-Aware Greedy policy.
Importantly, VA-SRP is simpler, as it is VAoI-agnostic and avoids per-slot VAoI tracking.

Now, we compare the TDMA and NOMA schemes. At lower values of $\bar{P}_i$, both TDMA and NOMA schemes exhibit similar performance under the VA-SRP and Greedy policies. As the available transmit power increases, NOMA begins to outperform TDMA. 
At low power levels, NOMA effectively behaves like TDMA by serving a single user,
whereas at higher power levels it enables simultaneous multi-user transmissions.

Under VA-SRP, as $\bar{P}_i$ increases, the average VAoI under NOMA approaches zero,
whereas under TDMA it saturates at $\sum_{i=1}^M w_i \lambda_i (M-1)$ for an $M$-user system.
This behavior follows directly from the status update delivery probability constraints. The long-term expected average VAoI under VA-SRP is given by $\sum_{i = 1}^M w_i \lambda_i (p_i^{-1} - 1)$, where $p_i \in [0,1]$ denotes the probability of successful status update delivery from User~$i$. 
In the TDMA scheme, even with infinite power, only one user can transmit a status update at any given time. Hence, with infinite $\bar{P}_i$, the optimal status update delivery probability becomes $p^*_i = 1/M,~\forall i$, leading to a minimum achievable average VAoI of $ \sum_{i = 1}^M w_i \lambda_i (M - 1)$. In contrast, with infinite $\bar{P}_i$, NOMA enables simultaneous transmissions to multiple users, allowing $p^*_i = 1,~\forall i$, which results in a minimum achievable average VAoI of zero.

In Fig.~\ref{fig-dist_vs_VAoI}, we study the variation of the average VAoI with respect to the bound on the average distortion for users, $\bar{D}_i$. Similar trends as discussed previously are observed from the figure: as $\bar{D}_i$ increases, the average VAoI decreases and eventually saturates. For a given $\bar{P}_i$, at lower values of $\bar{D}_i$, the Greedy policy outperforms all other policies. However, as $\bar{D}_i$ increases, the VA-SRP w/ PA with the NOMA scheme begins to outperform the Greedy policy.

In Fig.~\ref{fig-lambda_vs_VAoI}, we illustrate how the average VAoI varies with the packet arrival probability, $\lambda_i$. As $\lambda_i$ increases, the average VAoI also rises. Nevertheless, the relative performance trends among the different policies remain consistent with those observed in the previous results. It is observed that at $\lambda_i = 1$, with the NOMA scheme, both VA-SRP w/ PA and VA-SRP w/o PA achieve identical performance, since the constraint in \eqref{eq:pow_pi} and \eqref{eq:const_with_PA} becomes equal when $\lambda_i = 1$. However, for $\lambda_i < 1$, the VA-SRP w/ PA outperforms the VA-SRP w/o PA.

In Fig.~\ref{fig-region}, we illustrate the achievable long-term average VAoI regions for the TDMA and NOMA schemes under the VA-SRP w/o PA for the two-user case. The regions are obtained by varying the user weights $w_1$ and $w_2$ between $0$ and $1$, subject to $w_1 + w_2 = 1$. The figure shows that the achievable region of the NOMA scheme entirely encompasses that of the TDMA scheme. In TDMA, achieving a lower average VAoI for one user leads to a significantly higher VAoI for the other user, since it can serve only one user at a time. 

\input{Plots/table.tex}

In Table~\ref{tab-compute_time}, we report the offline and online computational times of the different policies. The VA-SRP takes a higher offline computational time than the VAoI-Aware Greedy, Max-VAoI-First, and Round-Robin policies, as these require no offline optimization. This is because the VA-SRP is computed using a subgradient ascent
algorithm with several hundred iterations (until convergence), where each iteration involves
sorting operations, maximization over a set, and the computation of primal 
variables using closed-form expressions.
Conversely, for online execution, the VAoI-Aware Greedy, Max-VAoI-First and Round-Robin policies require significantly more time than VA-SRP, since they track the VAoI, average power, and distortion of all users and must solve power-minimization problems under the MAC constraints. Overall, VA-SRP achieves the lowest average VAoI among the considered policies at
the cost of higher offline computation, while maintaining low online complexity
due to its stationary and VAoI-agnostic structure.

\input{Plots/distortion_fun_J.tex}

As discussed in the introduction, not all bits within an update are equally important in practice. 
Our framework captures such bit-priority structures through a distortion metric $\delta(\rho)$, 
which depends on the number of transmitted bits, $\rho$. 
This allows modeling scenarios where the first few bits carry task-critical information, 
while the remaining bits provide refinement.
To illustrate this flexibility, Fig.~\ref{fig:delta_fun} presents different distortion functions 
corresponding to different application-specific bit-priority structures. 
For different distortion functions, we show the variation of the long-term average VAoI ($\bar{\Delta}$) 
and scheduling probabilities $\mu(h,\rho)$ under the VA-SRP, for fixed bound on average transmit power, $\bar{P}$.

From Fig.~\ref{fig:delta_fun}, we observe that for a fixed distortion function 
(e.g., $\delta_1(\rho)$), relaxing the distortion constraint (i.e., increasing $\bar{D}$) 
leads to a reduction in the average VAoI. This is achieved by transmitting fewer bits more frequently, 
as reflected by increased scheduling probabilities for smaller $\rho$ and reduced probabilities for larger $\rho$. 
Conversely, tighter distortion constraints favor transmitting more bits per update but less frequently,
as required by applications demanding near-complete packet reconstruction.
Importantly, the choice of the distortion function significantly affects the scheduling policy 
even for a fixed $\bar{D}$. For example, with a convex distortion function $\delta_1(\rho)$, 
VA-SRP tends to transmit fewer bits with higher probability, resulting in a
lower average VAoI. In contrast, step and linear distortion functions
$\delta_2(\rho)$ and $\delta_3(\rho)$ favor transmitting more bits less frequently,
resulting in higher average VAoI. This shows that distortion functions can be selected based on application requirements 
regarding the preferred number of transmitting bits.
Overall, by appropriately selecting $\delta(\rho)$, the proposed framework enables a flexible trade-off 
between average VAoI and the frequency of transmitting a larger number of bits. 
This allows the VA-SRP to adapt across applications ranging from 
semantic or flag-based updates to full-packet reconstruction scenarios.

\section{Conclusion}
This paper studied VAoI minimization in uplink NOMA systems under average power and information-quality constraints captured through a general distortion metric. We developed a VAoI-agnostic stationary randomized policy that jointly optimizes scheduling, bit allocation, and power control, and achieves a provable 2-approximation to the globally optimal VAoI. By leveraging Lagrangian dual decomposition, we reduce the computational complexity of obtaining the policy. The algorithm has polynomial-time complexity per iteration, dominated by a sorting step to determine the decoding order for SIC.
Numerical results demonstrate that NOMA substantially outperforms TDMA, achieving near-zero VAoI at high power budgets through simultaneous multi-user transmissions. At the same time, TDMA remains limited by its single-user-per-slot constraint. Our distortion framework accommodates diverse bit-priority structures, from uniform importance to metadata-critical scenarios, enabling application-specific trade-offs between timeliness and information quality. Future work includes extending the framework to incorporate bit buffering across slots, time-varying channel statistics, and energy harvesting constraints.

\bibliographystyle{IEEEtran}
\bibliography{references}

\appendix
\subsection{Proof of Theorem~\ref{thm:convexity}}
We prove that~\eqref{eq:SRP_pi} is a convex problem by showing that
the objective is convex and all constraints define convex sets.

\textit{Objective function~\eqref{eq:obj_pi}:}
Recall that
$p_i(\boldsymbol{\mu}) = \sum_{\boldsymbol{\rho}} \sum_{\mathbf{h}}
\mu(\mathbf{h}, \boldsymbol{\rho}) \mathbb{I}\{\rho_i > 0\} \mathbb{P}(\mathbf{h})$,
which is affine in $\boldsymbol{\mu}$.
Since $f(x)=1/x$ is convex and decreasing for $x>0$, the composition
$1/p_i(\boldsymbol{\mu})$ is convex~\cite{Boyd}.
Hence, \eqref{eq:obj_pi}, being a nonnegative weighted sum of convex
functions, is convex.

\textit{Power constraint~\eqref{eq:pow_pi}:}
This constraint is linear in $\boldsymbol{\pi}$ and independent of
$\boldsymbol{\mu}$, and therefore defines a convex set.

\textit{Distortion constraint~\eqref{eq:dist_pi}:}
The left-hand side
$D'_i(\boldsymbol{\mu}) = \lambda_i \sum_{\boldsymbol{\rho}}
\mathbb{E}_{\mathbf{h}}[\mu(\mathbf{h}, \boldsymbol{\rho})
\delta(\rho_i)\mathbb{I}\{\rho_i>0\}]$
is affine in $\boldsymbol{\mu}$.
The right-hand side
$\bar{D}'_i(\boldsymbol{\mu}) =
\bar{D}_i\big(\lambda_i - \lambda_i p_i(\boldsymbol{\mu}) + p_i(\boldsymbol{\mu})\big)$
is also affine since $p_i(\boldsymbol{\mu})$ is affine.
Thus, the constraint defines a convex set.

\textit{MAC constraint~\eqref{eq:MAC_pi}:}
For fixed $(\mathcal{S},\mathbf{h},\boldsymbol{\rho})$, the constraint
\[
\mu(\mathbf{h}, \boldsymbol{\rho}) g^{-1}\!\Big(\sum_{i\in\mathcal{S}}\rho_i\Big)
\le \sum_{i\in\mathcal{S}} \pi_i(\mathbf{h}, \boldsymbol{\rho}) h_i
\]
is linear in $(\boldsymbol{\mu},\boldsymbol{\pi})$, since
$g^{-1}(\sum_{i\in\mathcal{S}}\rho_i)$ is constant.
Hence, each constraint defines a halfspace, whose intersection is convex.

\textit{Constraints~\eqref{eq:mu_bound} and~\eqref{eq:mu_sum}:}
These are affine constraints in $\boldsymbol{\mu}$ and therefore convex.

Since the objective function is convex and all constraints define convex sets,
problem~\eqref{eq:SRP_pi} is a convex optimization problem.

\balance
\subsection{Proof of Lemma~\ref{lm:Lemma_decoding}}

We first prove the lemma under the assumption that $\rho_{\theta_j} > 0$ for all $j \in \mathcal{M}$, and then extend the result to the case where $\rho_{\theta_j} = 0$ for some users.

\paragraph{Case 1 ($\rho_{\theta_j} > 0$ for all $j \in \mathcal{M}$)}
For a given decoding order $\boldsymbol{\theta}$, the objective function in
\eqref{eq:decoding_opt} can be written as
\begin{align*}
P(\boldsymbol{\theta})
= \sum_{j=1}^{M}
\frac{\beta_{\theta_j} g^{-1}(\rho_{\theta_j})}{h_{\theta_j}}
\prod_{k=j+1}^{M} \left(1 + g^{-1}(\rho_{\theta_k})\right),
\end{align*}
where all terms are strictly positive under the assumption
$\rho_{\theta_j} > 0$.

Consider any two adjacent users $x=\theta_t$ and $y=\theta_{t+1}$ in the
decoding order, and let $\boldsymbol{\theta}'$ denote the order obtained
by swapping $x$ and $y$, with all other user positions
unchanged. All terms in $P(\boldsymbol{\theta})$ and
$P(\boldsymbol{\theta}')$ are identical except those involving users
$x$ and $y$. A direct comparison yields
\begin{align*}
P(\boldsymbol{\theta}) - P(\boldsymbol{\theta}')
= G \, g^{-1}(\rho_x) g^{-1}(\rho_y)
\left(\frac{\beta_x}{h_x} - \frac{\beta_y}{h_y}\right),
\end{align*}
where $G>0$ is a common positive factor independent of $x$ and $y$.

Hence, $P(\boldsymbol{\theta}) \le P(\boldsymbol{\theta}')$ if and only if
$\beta_x/h_x \le \beta_y/h_y$. Therefore, any decoding order containing
an adjacent pair violating this condition can be improved by swapping the
pair. By repeated adjacent exchanges, the optimal decoding order must
satisfy
\begin{align*}
\frac{\beta_{\theta_1^*}}{h_{\theta_1^*}}
\le \frac{\beta_{\theta_2^*}}{h_{\theta_2^*}}
\le \cdots
\le \frac{\beta_{\theta_M^*}}{h_{\theta_M^*}},
\end{align*}
or equivalently,
\(
h_{\theta_1^*}/\beta_{\theta_1^*}
\ge \cdots \ge
h_{\theta_M^*}/\beta_{\theta_M^*}.
\)

\paragraph{Case 2 ($\rho_{\theta_j}=0$ for some users)}

If $\rho_i=0$, then $g^{-1}(\rho_i)=0$, and User~$i$ contributes zero power and zero cost to $P(\boldsymbol{\theta})$, regardless of its position in the decoding order. However, placing such users early in the SIC order is undesirable, since earlier decoding stages correspond to higher effective interference levels and require higher transmit powers.

This is resolved by modifying the sorting metric to
$h_i \mathbb{I}\{\rho_i>0\}/\beta_i$, which preserves the optimal ordering
among transmitting users while forcing all non-transmitting users to the
end of the decoding order. Their relative order is immaterial since they
incur zero power.

Accordingly, the optimal decoding order satisfies
\begin{align*}
\frac{h_{\theta_1^*}\mathbb{I}\{\rho_{\theta_1^*}>0\}}{\beta_{\theta_1^*}}
\ge \cdots \ge
\frac{h_{\theta_M^*}\mathbb{I}\{\rho_{\theta_M^*}>0\}}{\beta_{\theta_M^*}},
\end{align*}
which completes the proof.

\subsection{Proof of Theorem~\ref{thm:Bound}}
We provide a complete proof of Theorem~\ref{thm:Bound}, following a three-step argument.

\textit{Step 1: Lower bound on optimal VAoI.}
For a single user, the expected sum of VAoI over an inter-delivery interval of $I$ slots is $\mathbb{E}[\Delta_I] = \lambda (I^2 - I)/2$~\cite{Gangadhar_VAoI}, where $\lambda$ is the packet arrival probability. Consider a time horizon of $T$ slots. Let $D(T)=\sum_{t=1}^T \mathbb{I}\{\rho(t)>0\}$ denote the number of successful updates, $I[1],\ldots,I[D(T)]$ denotes inter-delivery intervals, and $R$ denote the remaining slots after the last delivery. The time-average expected VAoI is expressed as
\begin{align*}
    \frac{1}{T}\sum_{t=1}^T \mathbb{E}[\Delta(t)] = \frac{\lambda}{2}\left(\sum_{k=1}^{D(T)} \frac{I^2[k] - I[k]}{T} + \frac{R^2-R}{T}\right).
\end{align*}
By Jensen's inequality ($\bar{\mathbb{M}}[I^2] \geq (\bar{\mathbb{M}}[I])^2$, where $\bar{\mathbb{M}}[\cdot]$ denotes the sample mean) and minimizing with respect to $R$ yields lower bound to the long-term average VAoI as
\begin{align*}
    \lim_{T \to \infty} \frac{1}{T} \sum_{t=1}^T \mathbb{E}[\Delta(t)]  \geq \frac{\lambda}{2}\left(\frac{1}{q} - 1\right),
\end{align*}
where $q = \lim_{T \to \infty} (1/T)\mathbb{E}[D(t)] =   \lim_{T \to \infty} (1/T)\mathbb{E}[ \sum_{t=1}^T\mathbb{I}\{\rho(t) > 0 \}]$ is  the probability of status update delivery. Recall that $\rho>0$ corresponds to a successful update. For the $M$-user system, we have
\begin{align*}
    V_{\phi} = \lim_{T \to \infty} \frac{1}{T} \sum_{t=1}^T \sum_{i=1}^M w_i\mathbb{E}[\Delta_i(t)] \geq \frac{1}{2}\sum_{i=1}^M w_i \lambda_i\left(\frac{1}{q^{\phi}_i} - 1\right),
\end{align*}
for any policy $\phi$, where $q^{\phi}_i$ is the probability of status update delivery for User~$i$ under policy $\phi$. Minimizing over all admissible policies yields $V_{\rm opt}\ge L_B$, where 
\begin{align*}
    V_{\rm opt}=\min_{\phi}\; &V_{\phi},\\
    \text{subject to}\;\; &\eqref{eq:MAC}, \eqref{eq:age_evolution}, \eqref{eq:pow_constraint}, \eqref{eq:dist_constraint},
\end{align*}
and
\begin{align}\label{eq:LB_problem}
L_B = \min_{\phi} \;\; &
\frac{1}{2} \sum_{i=1}^{M} w_i \lambda_i 
\left(\frac{1}{q^{\phi}_i} - 1\right),\\
\text{subject to} \;\; &
\eqref{eq:MAC},\; \eqref{eq:age_evolution},\;
\eqref{eq:pow_constraint},\; \eqref{eq:dist_constraint}.\nonumber
\end{align}

\textit{Step 2: Construction of a feasible VA-SRP.}
Let $\phi^{\rm LB}$ denote the optimal policy achieving $L_B$, which specifies transmission decisions $(\rho_i^{\rm LB}(t), f^{\rm LB}_i(t))_{i \in \mathcal{M}}$ for each time slot $t$.
We construct a VA-SRP by defining scheduling probabilities based on the empirical frequencies under $\phi^{\rm LB}$:
\begin{align*}
\hat{\mu}(\mathbf{h}, \boldsymbol{\rho}) = \lim_{T \to \infty} \frac{1}{|\mathcal{T}_T(\mathbf{h})|} \mathbb{E}\left[\sum_{t \in \mathcal{T}_T(\mathbf{h})} \mathbb{I}\{\boldsymbol{\rho}^{\rm LB}(t) = \boldsymbol{\rho}\}\right],
\end{align*}
for all $\mathbf{h} \in \boldsymbol{\mathcal{H}}, \boldsymbol{\rho} \in \boldsymbol{\mathcal{R}}$,
where $\mathcal{T}_T(\mathbf{h}) = \{t \in \{1,\ldots,T\} : \mathbf{h}(t) = \mathbf{h}\}$, and $\boldsymbol{\rho}^{\rm LB}(t) \triangleq (\rho^{\rm LB}_i(t))_{i\in \mathcal{M}}$. The expectation is taken over the randomness in the system under the lower-bound policy $\phi^{\rm LB}$.

We have $\hat{\boldsymbol{\mu}} \triangleq (\hat{\mu}(\mathbf{h}, \boldsymbol{\rho}))_{\mathbf{h} \in \boldsymbol{\mathcal{H}}, \boldsymbol{\rho} \in \boldsymbol{\mathcal{R}}}$, and the probability of status update delivery for User $i$ under constructed VA-SRP, $p_i(\hat{\boldsymbol{\mu}}) = \sum_{\boldsymbol{\rho} \in \boldsymbol{\mathcal{R}}} \mathbb{E}_{\mathbf{h}}[\hat{\mu}(\mathbf{h}, \boldsymbol{\rho}) \, \mathbb{I}\{\rho_i > 0\}]$. Let $\hat{V}_{\rm SRP}$ denote the long-term expected average VAoI achieved by
the constructed VA-SRP. Using \eqref{eq:SRP_VAoI}, we obtain $\hat{V}_{\rm SRP} = \sum_{i =1}^M w_i \lambda_i (1/p_i(\hat{\boldsymbol{\mu}}) - 1)$, with $\hat{\boldsymbol{\mu}}$ satisfying average power and distortion constraints.

We define the probability of status update delivery under $\phi^{\rm LB}$ as
\begin{align*}
    q_i^{\rm LB} &= \lim_{T \to \infty} \frac{1}{T} \mathbb{E}\left[\sum_{t=1}^T \mathbb{I}\{\rho_i^{\rm LB}(t) > 0\}\right] \\
    &= \sum_{\mathbf{h} \in \boldsymbol{\mathcal{H}}} \sum_{\boldsymbol{\rho} \in \boldsymbol{\mathcal{R}}} \mathbb{P}(\mathbf{h}) \hat{\mu}(\mathbf{h}, \boldsymbol{\rho}) \mathbb{I}\{\rho_i > 0\}, \; \forall i \in \mathcal{M}.
\end{align*}
Substituting $q_i^{\rm LB}$ into the lower-bound expression yields
\begin{align}\label{eq:LB_opt}
    L_B = \frac{1}{2} \sum_{i = 1}^M w_i \lambda_i \left(\frac{1}{q_i^{\rm LB}} - 1 \right).
\end{align}
Since the constructed VA-SRP employs the same scheduling probabilities
$\hat{\mu}(\mathbf{h},\boldsymbol{\rho})$ for all 
$(\mathbf{h},\boldsymbol{\rho})$, the resulting status update probability for User~$i$ satisfies $
p_i(\hat{\boldsymbol{\mu}}) = q_i^{\rm LB}$,
and the corresponding average VAoI is given by
$\hat{V}_{\rm SRP} = \sum_{i=1}^M w_i \lambda_i \left( {1}/{q_i^{\rm LB}} - 1 \right)$.

Since the constructed VA-SRP inherits the scheduling frequencies of $\phi^{\rm LB}$,
it satisfies the average power and distortion constraints. Comparing with
\eqref{eq:LB_opt} yields $\hat{V}_{\rm SRP}=2L_B$.

\emph{Power constraint satisfaction:} 
Power constraint satisfaction follows directly from~\cite{Gangadhar_VAoI}:
the constructed VA-SRP inherits feasibility since $\hat{\mu}(\mathbf{h},\boldsymbol{\rho})$
is derived from the lower-bound policy.

\emph{Distortion constraint satisfaction:} 
Since the lower-bound problem \eqref{eq:LB_problem} includes distortion constraints, $\phi^{\rm LB}$ satisfies the average distortion constraints:
\begin{align}\label{eq:LB_distortion}
    \lim_{T \to \infty} \frac{1}{T} \mathbb{E}\left[\sum_{t = 1}^T d_i(t)\right] \leq \bar{D}_i, \quad \forall i \in \mathcal{M}.
\end{align}
 We can decompose the left-hand side by conditioning on the queue state. Under any policy, the per-slot expected distortion is: $\mathbb{E}[d_i(t)] = \mathbb{E}[d_i(t) | Q_i(t)=1] \cdot \mathbb{P}(Q_i(t)=1)$,
since $d_i(t) = 0$ when $Q_i(t) = 0$.
For the lower-bound policy, the conditional expected distortion is
\begin{align*}
     \mathbb{E}^{\phi^{\rm LB}}&[d_i(t) | Q_i(t) = 1]\\ 
     &=\lim_{T \to \infty} \frac{1}{T} \mathbb{E}\left[\sum_{t=1}^T \delta(\rho_i^{\rm LB}(t)) \mathbb{I}\{\rho_i^{\rm LB}(t) > 0\}\right]\\
     &= \sum_{\mathbf{h}\in \boldsymbol{\mathcal{H}}} \sum_{\boldsymbol{\rho}\in \boldsymbol{\mathcal{R}}} \mathbb{P}(\mathbf{h}) \hat{\mu}(\mathbf{h}, \boldsymbol{\rho}) \delta(\rho_i) \mathbb{I}\{\rho_i \smash{>} 0\}.
\end{align*}
The constructed SRP uses the same scheduling probabilities $\hat{\mu}(\mathbf{h}, \boldsymbol{\rho})$ for all $(\mathbf{h}, \boldsymbol{\rho})$. Therefore, the conditional expected distortion under the constructed VA-SRP is
\begin{align*}
    \mathbb{E}^{\rm SRP}[d_i(t) | Q_i(t)=1] = \mathbb{E}^{\phi^{\rm LB}}[d_i(t) | Q_i(t)=1].
\end{align*}
From~\eqref{eq:Q_prob}, the steady-state queue occupancy probability depends only on $\lambda_i$ and the status update probability $p_i$.
Since $p_i(\hat{\boldsymbol{\mu}}) = q_i^{\rm LB}$, we have
\begin{align*}
    \mathbb{P}^{\rm SRP}(Q_i(t)=1) = \frac{\lambda_i}{\lambda_i(1-q_i^{\rm LB}) + q_i^{\rm LB}} = \mathbb{P}^{\phi^{\rm LB}}(Q_i(t)=1).
\end{align*}
Therefore, the expected average distortion under the constructed SRP is
\begin{align*}
    \mathbb{E}^{\rm SRP}[d_i(t)] &= \mathbb{E}^{\rm SRP}[d_i(t) | Q_i(t)=1] \cdot \mathbb{P}^{\rm SRP}(Q_i(t)=1) \\
    &= \mathbb{E}^{\phi^{\rm LB}}[d_i(t) | Q_i(t)=1] \cdot \mathbb{P}^{\phi^{\rm LB}}(Q_i(t)=1) \\
    &= \mathbb{E}^{\phi^{\rm LB}}[d_i(t)] \leq \bar{D}_i,
\end{align*}
where the last inequality follows from~\eqref{eq:LB_distortion}. Hence, the constructed VA-SRP satisfies the distortion constraints.

\textit{Step 3: Establishing the 2-approximation.}
Since the scheduling probabilities $\hat{\boldsymbol{\mu}}$ of the constructed
VA-SRP constitute a feasible solution to problem~\eqref{eq:SRP_pi}, the
corresponding objective value satisfies $\hat{V}_{\rm SRP} \geq V_{\rm SRP}$.
Combining all the established bounds, we obtain
\begin{align*}
 L_B \leq V_{\rm opt} \leq V_{\rm SRP} \leq \hat{V}_{\rm SRP} = 2L_B \leq 2V_{\rm opt}.
\end{align*}
Consequently, the VA-SRP achieves a $2$-approximation of the optimal VAoI, i.e.,  $V_{\rm SRP} \leq 2V_{\rm opt}$.

\end{document}

%% file: Plots/age_vs_P_D.tex
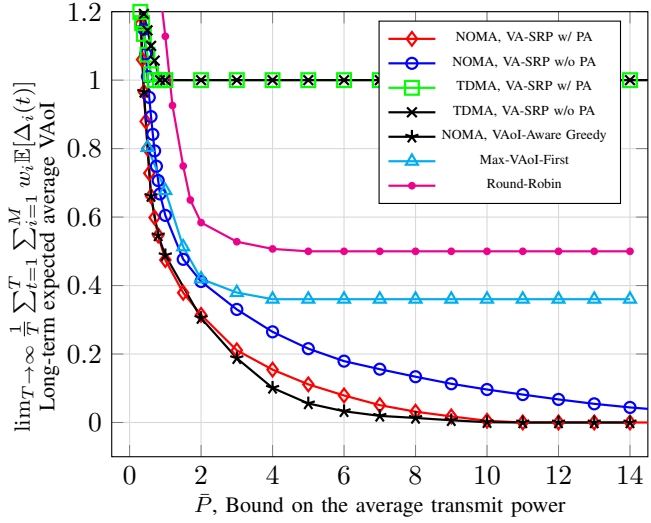
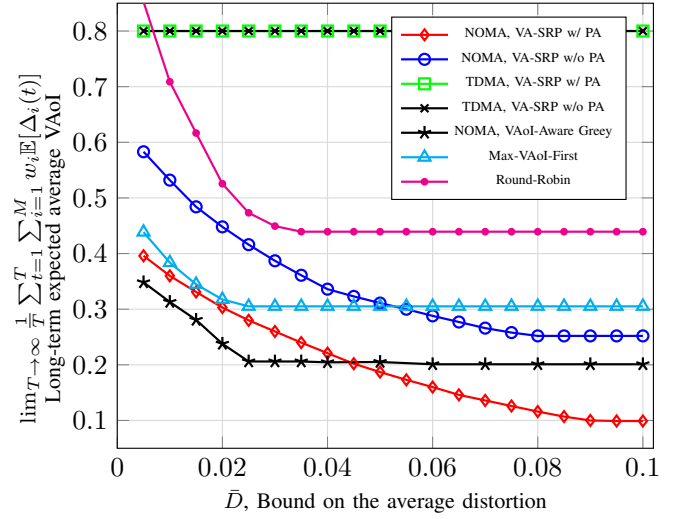
\begin{figure*}[t]
    \centering
    \begin{subfigure}[b]{0.48\textwidth}
        \centering
        \begin{tikzpicture}[scale=1.04]
          \begin{axis}[
            grid=both,
       major grid style={draw=gray!30},   
       minor grid style={draw=gray!15}, 
            xtick={0,2,...,20},
            ytick={0,0.2,...,1.6},
            xlabel={$\bar{P}$, Bound on the average transmit power},
            ylabel={$\lim_{T\rightarrow \infty} \frac{1}{T}\sum_{t=1}^{T}\sum_{i=1}^{M}w_i\mathbb{E}[\Delta_i(t)]$\\Long-term expected average VAoI},
           xlabel style={
            at={(axis description cs:0.5,0.03)},
            anchor=north,
        },
        ylabel style={
            at={(axis description cs:0.1,0.45)},
            anchor=south,},
            ymax=1.2,
            ymin=-0.1,
            xmin = -0.5,
            xmax = 14.5,
            every axis label/.append style={align=center, font=\footnotesize},
            legend style={font=\tiny, at={(0.95, 0.98)}}]

        \pgfplotstableread[col sep=comma]{Results/p_bar_vs_age.csv}\datatable
        
          \addplot [color= red, mark=diamond, mark options={solid, red}, mark size = 2.5pt, line width=0.8pt]table[x=PN_PA,y=N_PA]{\datatable};
          \addlegendentry{NOMA, VA-SRP w/ PA};
    
          \addplot [color= blue, mark=o, mark options={solid, blue}, mark size = 2pt, line width=0.8pt]table[x=P2,y=N2]{\datatable};
          \addlegendentry{NOMA, VA-SRP w/o PA};

          \addplot [color= green, mark=square, mark options={green}, mark size = 2.5pt, line width=0.8pt]table[x=PT_PA,y=T_PA]{\datatable};
          \addlegendentry{TDMA, VA-SRP w/ PA};

          \addplot [color= black, mark=x, mark options={black}, mark size = 2.5pt, line width=0.8pt]table[x=P3,y=N3]{\datatable};
          \addlegendentry{TDMA, VA-SRP w/o PA};

          \addplot [color= black, mark=star, mark options={black}, mark size = 2.5pt, line width=0.8pt]table[x=Pow_G,y=Greedy]{\datatable};
          \addlegendentry{NOMA, VAoI-Aware Greedy};

          \addplot [color= cyan, mark=triangle, mark options={cyan}, mark size = 2.5pt, line width=0.8pt]table[x=pow_max,y=Max_aoi2]{\datatable};
          \addlegendentry{Max-VAoI-First};

          \addplot [color= magenta, mark=*, mark options={magenta}, mark size = 1pt, line width=0.8pt]table[x=pow_rr,y=RR]{\datatable};
          \addlegendentry{Round-Robin};
    
          \end{axis}
        \end{tikzpicture}
        \caption{}
        \label{fig-pow_vs_VAoI}
    \end{subfigure}
    \hfill
    \begin{subfigure}[b]{0.48\textwidth}
        \centering
        \begin{tikzpicture}[scale=1.04]
      \begin{axis}[
       grid=both,
       major grid style={draw=gray!30},   
       minor grid style={draw=gray!15},   
       xtick = {0, 0.02,...,0.1},
       ytick = {0, 0.1,..., 0.8},
        xlabel={ $\bar{D}$, Bound on the average distortion},
        ylabel={$\lim_{T\rightarrow \infty} \frac{1}{T}\sum_{t=1}^{T}\sum_{i=1}^{M}w_i\mathbb{E}[\Delta_i(t)]$\\Long-term expected average VAoI},
        xlabel style={
            at={(axis description cs:0.5,0.02)},
            anchor=north,
        },
        ylabel style={
            at={(axis description cs:0.1,0.45)},
            anchor=south,},
        xticklabel style={
        /pgf/number format/fixed,
        /pgf/number format/precision=2
    },
        ymax=0.85,
        ymin= 0.05,
        xmax = 0.102,
        xmin = 0,
        every axis label/.append style={align=center, font=\footnotesize},
        legend style={font=\tiny, at={(0.95,0.97)}}]

    \pgfplotstableread[col sep=comma]{Results/d_bar_vs_age.csv}\datatable
    
          \addplot [color= red, mark=diamond, mark options={solid, red}, mark size = 2pt, line width=0.8pt]table[x=d_bar,y=NOMA_SRP_PA]{\datatable};
          \addlegendentry{NOMA, VA-SRP w/ PA};

          \addplot [color= blue, mark=o, mark options={solid, blue}, mark size = 2pt, line width=0.8pt]table[x=d_bar,y=NOMA_SRP]{\datatable};
          \addlegendentry{NOMA, VA-SRP w/o PA};

          \addplot [color= green, mark=square, mark options={solid, green}, mark size = 2pt, line width=0.8pt]table[x=d_bar,y=TDMA_SRP]{\datatable};
          \addlegendentry{TDMA, VA-SRP w/ PA};

          \addplot [color= black, mark=x, mark options={solid, black}, mark size = 2pt, line width=0.8pt]table[x=d_bar,y=TDMA_SRP]{\datatable};
          \addlegendentry{TDMA, VA-SRP w/o PA};

          \addplot [color= black, mark=star, mark options={solid, black}, mark size = 2.5pt, line width=0.8pt]table[x=d_bar2,y=Greedy]{\datatable};
          \addlegendentry{NOMA, VAoI-Aware Greey};

          \addplot [color= cyan, mark=triangle, mark options={solid, cyan}, mark size = 2.5pt, line width=0.8pt]table[x=d_bar,y=Max_aoi3]{\datatable};
          \addlegendentry{Max-VAoI-First};

          \addplot [color= magenta, mark=*, mark options={solid, magenta}, mark size = 1pt, line width=0.8pt]table[x=d_bar,y=RR]{\datatable};
          \addlegendentry{Round-Robin};

      \end{axis}
    \end{tikzpicture}
    \caption{}
    \label{fig-dist_vs_VAoI}
    \end{subfigure}
\caption{Long-term expected average VAoI vs. (a) power bound $\bar{P}_i = \bar{P}$,
for $\bar{D}_i = 0.05$, $\lambda_i = 0.5$; (b) distortion bound 
$\bar{D}_i = \bar{D}$, for $\bar{P}_i = 2$, $\lambda_i = 0.4, \;\forall i$. 
System parameters: $M=3$, $w_i = 1/3$, $\rho_i \in \{0,1,2\}$, 
$\mathcal{H}_i = \{0.1, 1\}$ with equal probability, $\forall i \in \{1,2,3\}$.}
\label{fig-Power_Dist}
\end{figure*}

%% file: Plots/age_vs_lambda.tex
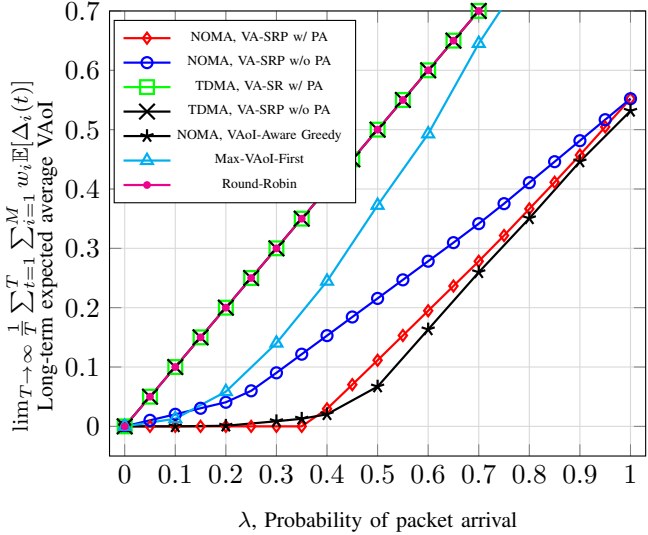
\begin{figure}[t]
    \centering
    \begin{tikzpicture}[scale=1.04]
          \begin{axis}[
            grid=both,
       major grid style={draw=gray!30},   
       minor grid style={draw=gray!15}, 
       xtick = {0, 0.1,...,1},
       ytick = {0, 0.1,...,1},
            xlabel={$\lambda$, Probability of packet arrival},
            ylabel={$\lim_{T\rightarrow \infty} \frac{1}{T}\sum_{t=1}^{T}\sum_{i=1}^{M}w_i\mathbb{E}[\Delta_i(t)]$\\Long-term expected average VAoI},
            xlabel style={
            at={(axis description cs:0.5,-0.01)},
            anchor=north,
        },
        ylabel style={
            at={(axis description cs:0.1,0.45)},
            anchor=south,},
            ymax=0.7,
            ymin=-0.05,
            xmin = -0.03,
            xmax = 1.03,
            every axis label/.append style={align=center, font=\footnotesize},
            legend style={font=\tiny, at={(0.46,0.98)}}]

            \pgfplotstableread[col sep=comma]{Results/lambda_vs_age.csv}\datatable
            
          \addplot [color= red, mark=diamond, mark options={solid, red}, mark size = 2pt, line width=0.8pt]table[x=lambda,y=NOMA_SRP_PA]{\datatable};
          \addlegendentry{NOMA, VA-SRP w/ PA};

          \addplot [color= blue, mark=o, mark options={solid, blue}, mark size = 2pt, line width=0.8pt]table[x=lambda,y=NOMA_SRP]{\datatable};
          \addlegendentry{NOMA, VA-SRP w/o PA};

          \addplot [color= green, mark=square, mark options={green}, mark size = 2.5pt, line width=0.8pt]table[x=lambda,y=TDMA_SRP]{\datatable};
          \addlegendentry{TDMA, VA-SR w/ PA};

          \addplot [color= black, mark=x, mark options={black}, mark size = 4pt, line width=0.8pt]table[x=lambda,y=TDMA_SRP]{\datatable};
          \addlegendentry{TDMA, VA-SRP w/o PA};

          \addplot [color= black, mark=star, mark options={black}, mark size = 2.5pt, line width=0.8pt]table[x=lambda_3,y=Greedy3]{\datatable};
          \addlegendentry{NOMA, VAoI-Aware Greedy};

          \addplot [color= cyan, mark=triangle, mark options={cyan}, mark size = 2.5pt, line width=0.8pt]table[x=lambda_2,y=Max_aoi2]{\datatable};
          \addlegendentry{Max-VAoI-First};

          \addplot [color= magenta, mark=*, mark options={magenta}, mark size = 1pt, line width=0.8pt]table[x=lambda,y=TDMA_SRP]{\datatable};
          \addlegendentry{Round-Robin};

          \end{axis}
        \end{tikzpicture}
\caption{Long-term expected average VAoI vs. probability of packet arrival, $\lambda_i = \lambda$, 
for different scheduling policies. System parameters: $M=3$, $\bar{P}_i = 5$, 
$\bar{D}_i = 0.05$, $w_i = 1/3$, $\rho_i \in \{0,1,2\}$, 
$\mathcal{H}_i = \{0.1, 1\}$ with equal probability, $\forall i \in \{1,2,3\}$.}
        \label{fig-lambda_vs_VAoI}
\end{figure}

%% file: Plots/achievable_region.tex
\begin{figure}[t]
    \centering
    \begin{tikzpicture}[scale=1.04]
      \begin{axis}[
        xlabel={$\lim_{T\rightarrow \infty} \frac{1}{T}\sum_{t=1}^{T} \mathbb{E}[\Delta_1(t)]$, \\Long-term average VAoI of User $1$},
        ylabel={$\lim_{T\rightarrow \infty} \frac{1}{T}\sum_{t=1}^{T} \mathbb{E}[\Delta_2(t)]$,\\ Long-term average VAoI of User $2$},
        xlabel style={
            at={(axis description cs:0.5,0.02)},
            anchor=north, font = \small},
        ylabel style={
            at={(axis description cs:0.1,0.45)},
            anchor=south, font = \small},
        every axis label/.append style={align=center, font=\footnotesize},
         legend style={font=\footnotesize, at={(0.9,0.9)}},
        ymin=0,
        ymax=2.5,
        xmin=0,
        xmax=2.5]
    
       \pgfplotstableread[col sep=comma]{Results/achiev_region_SRP.csv}\datatable 
       
      \addplot [name path=noma, color=blue, very thick,  mark options={blue}, mark size = 1.8pt]table[x=noma_u1,y=noma_u2]{\datatable}; 
       \addlegendentry{NOMA };

      \addplot [dashed, name path=tdma, color=red, ultra thick, mark options={red, dashed}, mark size = 1.8pt]table[x=tdma_u1,y=tdma_u2]{\datatable}; 
       \addlegendentry{TDMA};

      \addplot [name path=out_curve, color=black, mark options={red}, mark size = 1.8pt]table[x=test1,y=test2]{\datatable}; 
    
       \addplot [blue!10] fill between [of = tdma and noma];
       \addplot [red!5, pattern=north east lines, pattern color=red] fill between [of = out_curve and tdma];
  
       \end{axis}
    \end{tikzpicture}
\caption{Achievable long-term expected average VAoI regions for NOMA and TDMA schemes 
under VA-SRP w/o PA by varying user weights $(w_1, w_2)$ subject to 
$w_1 + w_2 = 1$. System parameters: $M=2$, $\bar{P}_i = 5$, $\bar{D}_i = 0.05$, 
$\lambda_i = 0.9$, $\rho_i \in \{0,1,2\}$, $\mathcal{H}_i = \{0.1, 1\}$ 
with equal probability, $\forall i \in \{1,2\}$.}
    \label{fig-region}
\end{figure}
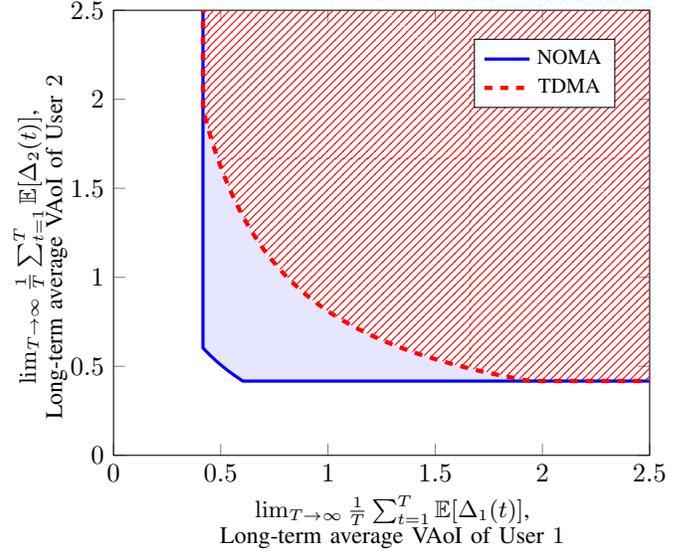

%% file: Plots/table.tex
\begin{table}[t]										
	\centering			
	\caption{ Comparison of average online run-time (per time slot) of different scheduling policies. Each value is obtained by averaging $10$ sample paths, each consisting of $500{\rm k}$ time slots. All the simulations are conducted on a $12^{\rm th}$ Generation Intel Core $i5$-$12600$ Processor with $16$\,GB RAM, using Python~$3.10$. System parameters:  $M = 3$, $\bar{P}_i = 2, \bar{D}_i = 0.06$,  $\lambda_i = 0.5$,  $w_i = 1/3$, $\rho_i \in \{0, 1, 2\}$, $\mathcal{H}_i = \{0.1, 1\}$ with equal probability, $\forall i \in \{1, 2, 3\}$.}				 	\label{tab-compute_time}												
	\begin{tabular}{|p{1.3cm}|p{1cm}|p{1cm}| p{0.9cm}|p{0.8cm}|p{0.8cm}|}									
		\hline  & \vspace{0.03cm} NOMA, VA-SRP w/ PA & \vspace{0.03cm} TDMA, VA-SRP w/ PA & \vspace{0.01cm} NOMA, VAoI-Aware Greedy & \vspace{0.03cm}  Max-VAoI-First & \vspace{0.03cm} Round-Robin   \\								
		
       \hline								
	      Online Run-time (milli seconds) &  \vspace{0.05cm} $0.026$ & \vspace{0.05cm} $0.021$ & \vspace{0.05cm} $176.26$ & \vspace{0.05cm} $51.48$ & \vspace{0.05cm} $50.05$ \\ 
        \hline								
	      Offline Run-time (seconds)  & \vspace{0.05cm} $85.7$ & \vspace{0.05cm} $16.3$ & \vspace{0.05cm} - & \vspace{0.05cm} - & \vspace{0.05cm} - \\ 
       \hline
        Average VAoI & \vspace{0.02cm}  $0.2762$ & \vspace{0.02cm} \vspace{0.02cm} $1.00$ & \vspace{0.02cm} $0.3067$ & \vspace{0.02cm} $0.4185$ & \vspace{0.02cm} $0.4349$ \\  
            \hline										
	\end{tabular}										
\end{table}

%% file: Plots/distortion_fun_J.tex
\begin{figure}[t]
\centering
\begin{tikzpicture}
\matrix (m) [
    matrix of nodes,
    row sep=3mm,
    column sep=1mm]
{
\includegraphics[width=0.462\linewidth]{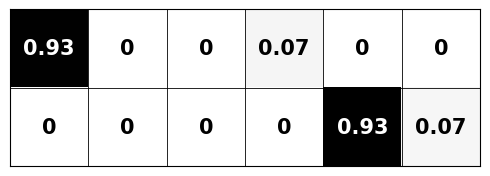} &
\includegraphics[width=0.462\linewidth]{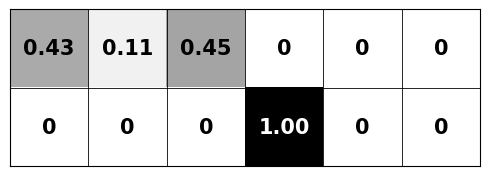} \\

\includegraphics[width=0.462\linewidth]{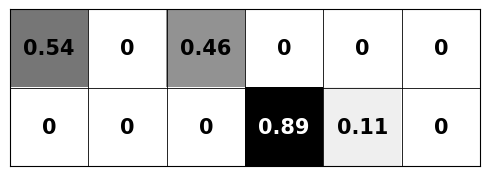} &
\includegraphics[width=0.462\linewidth]{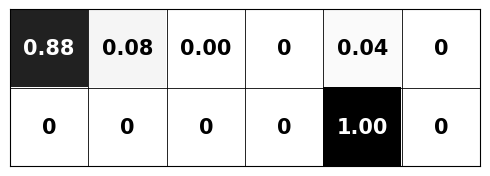} \\

\includegraphics[width=0.462\linewidth]{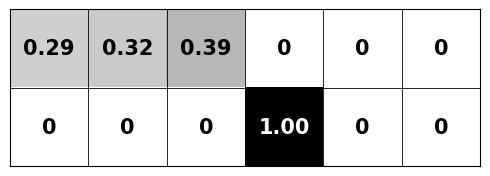} &
\includegraphics[width=0.462\linewidth]{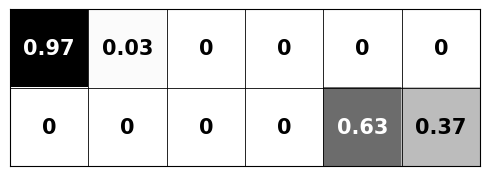} \\

\includegraphics[width=0.462\linewidth]{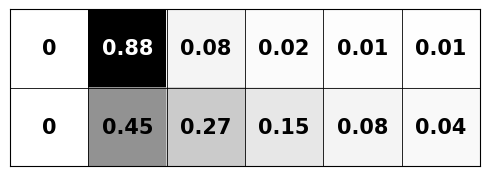} &
\includegraphics[width=0.462\linewidth]{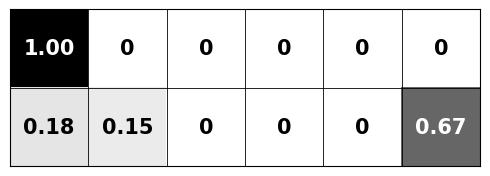} \\
};

\node at (-2.2,3.8) {\tiny{$\rho_0\qquad\;\rho_1\qquad \;\rho_2 \qquad \;\rho_3 \qquad \;\rho_4 \qquad \rho_5$}};

\node at (-2.2,1.85) {\tiny{$\rho_0\qquad\;\rho_1\qquad \;\rho_2 \qquad \;\rho_3 \qquad \;\rho_4 \qquad \rho_5$}};

\node at (-2.2, -0.15) {\tiny{$\rho_0\qquad\;\rho_1\qquad \;\rho_2 \qquad \;\rho_3 \qquad \;\rho_4 \qquad \rho_5$}};

\node at (-2.2,-2.2) {\tiny{$\rho_0\qquad\;\rho_1\qquad \;\rho_2 \qquad \;\rho_3 \qquad \;\rho_4 \qquad \rho_5$}};

\node at (2.3,-2.2) {\tiny{$\rho_0\qquad\;\rho_1\qquad \;\rho_2 \qquad \;\rho_3 \qquad \;\rho_4 \qquad \rho_5$}};

\node at (2.3, -0.15) {\tiny{$\rho_0\qquad\;\rho_1\qquad \;\rho_2 \qquad \;\rho_3 \qquad \;\rho_4 \qquad \rho_5$}};

\node at (2.3, 1.85) {\tiny{$\rho_0\qquad\;\rho_1\qquad \;\rho_2 \qquad \;\rho_3 \qquad \;\rho_4 \qquad \rho_5$}};

\node at (2.3, 3.8) {\tiny{$\rho_0\qquad\;\rho_1\qquad \;\rho_2 \qquad \;\rho_3 \qquad \;\rho_4 \qquad \rho_5$}};

\node at (0.1, 3+0.2) {\tiny{$h_0$}};
\node at (0.1, 1.3) {\tiny{$h_0$}};
\node at (0.1, -0.4-0.2) {\tiny{$h_0$}};
\node at (0.1, -2.2-0.5) {\tiny{$h_0$}};

\node at (0.1, 2.2+0.4) {\tiny{$h_1$}};
\node at (0.1, 0.5+0.1) {\tiny{$h_1$}};
\node at (0.1, -1.2) {\tiny{$h_1$}};
\node at (0.1, -3-0.4) {\tiny{$h_1$}};

\node at (-4.4, 3+0.2) {\tiny{$h_0$}};
\node at (-4.4, 1.3) {\tiny{$h_0$}};
\node at (-4.4, -0.4-0.2) {\tiny{$h_0$}};
\node at (-4.4, -2.2-0.5) {\tiny{$h_0$}};

\node at (-4.4, 2.2+0.4) {\tiny{$h_1$}};
\node at (-4.4, 0.5+0.1) {\tiny{$h_1$}};
\node at (-4.4, -1.2) {\tiny{$h_1$}};
\node at (-4.4, -3-0.4) {\tiny{$h_1$}};

\node at (2.2, 4.1) {\scriptsize{$\delta_1(\rho), \bar{D} = 0.07, \bar{\Delta} = 0.2498$}};
\node at (2.2, 2.1) {\scriptsize{$\delta_2(\rho), \bar{D} = 0.07, \bar{\Delta} = 0.7115$}};
\node at (2.2, 0.1) {\scriptsize{$\delta_3(\rho), \bar{D} = 0.07, \bar{\Delta} = 0.8531$}};
\node at (2.2, -2+0.05) {\scriptsize{$\delta_4(\rho), \bar{D} = 0.07, \bar{\Delta} = 1.2998$}};

\node at (-2, 4.1) {\scriptsize{$\delta_1(\rho), \bar{D} = 0.01,  \bar{\Delta} = 0.7771$}};
\node at (-2, 2.1) {\scriptsize{$\delta_1(\rho), \bar{D} = 0.05,  \bar{\Delta} = 0.3353$}};
\node at (-2, 0.05) {\scriptsize{$\delta_1(\rho), \bar{D} = 0.1,  \bar{\Delta} = 0.1544$}};
\node at (-2, -2+0.05) {\scriptsize{$\delta_1(\rho), \bar{D} = 0.2,  \bar{\Delta} = 0$}};

\end{tikzpicture}
\begin{tikzpicture}[scale=0.9]
\begin{axis}[
    xmin=1, xmax=5,
    ymin=-0.02, ymax=1.05,
    xlabel={$\rho$},
    ylabel={$\delta(\rho)$},
    grid=both,
    xlabel style={
        at={(axis description cs:0.5,-0.01)},
        anchor=north},
    ylabel style={
        at={(axis description cs:0.1,0.45)},
        anchor=south},
    legend style={font=\tiny, at={(0.73, 0.65)}},
    samples=400,
    domain=0:5
]

\addplot[color=red, line width=1.6pt, dashed]
    {exp(-x)};
\addlegendentry{$\delta_1(\rho)$: Convex}

\addplot[
    color=blue,
    line width=1.6pt,
    dash pattern=on 6pt off 3pt,
    domain=0:6,
    samples=400
]
{ abs(x-4) < 1e-3 ? 0.05 : (x < 4 ? 1 : 0.0) };
\addlegendentry{$\delta_2(\rho)$: Step}

\addplot[color=cyan, line width=1.6pt]
    {1 - x/5};
\addlegendentry{$\delta_3(\rho)$: Linear}

\addplot[color=black, line width=1.6pt, dotted]
    {pow(max(0, cos(deg(pi*x/10))), 0.3)};
\addlegendentry{$\delta_4(\rho)$: Concave}

\end{axis}
\end{tikzpicture}

\caption{Variation of the scheduling probabilities $\mu(h,\rho)$ (shown as table entries) and the long-term expected average VAoI under the VA-SRP,
$\bar{\Delta} = \lim_{T \to \infty} \frac{1}{T} \sum_{t=1}^{T} \mathbb{E}[\Delta(t)] = \lambda ((\sum_{\rho}\mathbb{E}_h[\mu(h, \rho)] \mathbb{I}\{\rho > 0\} )^{-1} -1)$, for different
distortion functions, $\delta(\rho)$ and bounds on the average distortion, $\bar{D}$.
The considered distortion functions are
$\delta_1(\rho)=e^{-\rho}$,
$\delta_2(\rho)=1$ for $\rho< (r_{\rm max} -1)$, $\delta_2(r_{\rm max} - 1)=0.05$, and $\delta_2(\rho)=0$ otherwise,
$\delta_3(\rho) = 1-\rho/r_{\rm max}$
and $\delta_4(\rho)=(\cos(\pi \rho/2r_{\rm \max})^{0.3}$. System parameters: $M = 1, \bar{P} = 10, \lambda = 0.9,  r_{\rm max} = 5$, $\rho \in \{\rho_0=0,\rho_1=1,\rho_2=2,\rho_3=3,\rho_4=4,\rho_5=5\}$, $\mathcal{H}=\{h_0=0.1,\,h_1=1\}$ with equal probability.
}

\label{fig:delta_fun}
\end{figure}

%% file: main.bbl
\begin{thebibliography}{10}
\providecommand{\url}[1]{#1}
\csname url@samestyle\endcsname
\providecommand{\newblock}{\relax}
\providecommand{\bibinfo}[2]{#2}
\providecommand{\BIBentrySTDinterwordspacing}{\spaceskip=0pt\relax}
\providecommand{\BIBentryALTinterwordstretchfactor}{4}
\providecommand{\BIBentryALTinterwordspacing}{\spaceskip=\fontdimen2\font plus
\BIBentryALTinterwordstretchfactor\fontdimen3\font minus \fontdimen4\font\relax}
\providecommand{\BIBforeignlanguage}[2]{{%
\expandafter\ifx\csname l@#1\endcsname\relax
\typeout{** WARNING: IEEEtran.bst: No hyphenation pattern has been}%
\typeout{** loaded for the language `#1'. Using the pattern for}%
\typeout{** the default language instead.}%
\else
\language=\csname l@#1\endcsname
\fi
#2}}
\providecommand{\BIBdecl}{\relax}
\BIBdecl

\bibitem{kaul2012real}
R.~D. Yates and S.~K. Kaul, ``{Status updates over unreliable multiaccess channels},'' in \emph{IEEE ISIT}, 2017, pp. 331--335.

\bibitem{pappas2023age}
N.~Pappas, M.~A. Abd-Elmagid, B.~Zhou, W.~Saad, and H.~S. Dhillon, Eds., \emph{Age of Information: Foundations and Applications}.\hskip 1em plus 0.5em minus 0.4em\relax Cambridge University Press, 2023.

\bibitem{kosta2017age}
A.~Kosta, N.~Pappas, V.~Angelakis \emph{et~al.}, ``{Age of Information: A New Concept, Metric, and Tool},'' \emph{Foundations and Trends in Networking}, vol.~12, no.~3, pp. 162--259, 2017.

\bibitem{sun2019age}
Y.~Sun, I.~Kadota, R.~Talak, and E.~Modiano, \emph{Age of Information: A New Metric for Information Freshness}.\hskip 1em plus 0.5em minus 0.4em\relax Springer Nature, 2022.

\bibitem{yates2021age}
R.~D. Yates, ``{The Age of Gossip in Networks},'' in \emph{IEEE ISIT}, 2021, pp. 2984--2989.

\bibitem{Gangadhar_VAoI}
G.~Karevvanavar, H.~Pable, O.~Patil, R.~Bhat, and N.~Pappas, ``{Version Age of Information Minimization Over Fading Broadcast Channels},'' \emph{IEEE Trans. Wireless Commun.}, vol.~24, no.~2, pp. 1620--1634, 2025.

\bibitem{baturalp2022version}
B.~Buyukates, M.~Bastopcu, and S.~Ulukus, ``{Version Age of Information in Clustered Gossip Networks},'' \emph{IEEE J. Sel. Areas Inf. Theory}, vol.~3, no.~1, pp. 85--97, 2022.

\bibitem{DelfaniTCOM25}
E.~Delfani and N.~Pappas, ``{Version Age-Optimal Cached Status Updates in a Gossiping Network With Energy Harvesting Sensor},'' \emph{IEEE Trans. Commun.}, vol.~73, no.~4, pp. 2344--2360, 2025.

\bibitem{DelfaniCOMML25}
------, ``{Semantics-Aware Updates From Remote Energy Harvesting Devices to Interconnected LEO Satellites},'' \emph{IEEE Commun. Lett.}, vol.~29, no.~8, pp. 1928--1932, 2025.

\bibitem{DelfaniINFOCOM26}
------, ``{From Timestamps to Versions: Version AoI in Single- and Multi-Hop Networks},'' \emph{arXiv preprint arXiv:2507.23433}, 2025.

\bibitem{WiOpt-21}
G.~Gagan, S.~Jayanth, and V.~B. Rajshekhar, ``{Age of Information Minimization with Power and Distortion Constraints in Multiple Access Channels},'' in \emph{IEEE WiOpt}, 2021, pp. 1--7.

\bibitem{AoI-Dist-Tradeoff4}
N.~Rajaraman, R.~Vaze, and G.~Reddy, ``{Not Just Age but Age and Quality of Information},'' \emph{J. Sel. Areas Commun.}, vol.~39, no.~5, pp. 1325--1338, 2021.

\bibitem{yates2018age}
R.~D. Yates and S.~K. Kaul, ``{The Age of Information: Real-Time Status Updating by Multiple Sources},'' \emph{IEEE Trans. Inf. Theory}, vol.~65, no.~3, pp. 1807--1827, 2018.

\bibitem{costa2016age}
M.~Costa, M.~Codreanu, and A.~Ephremides, ``{On the Age of Information in Status Update Systems With Packet Management},'' \emph{IEEE Trans. Inf. Theory}, vol.~62, no.~4, pp. 1897--1910, 2016.

\bibitem{kadota2018optimizing}
I.~Kadota, A.~Sinha, and E.~Modiano, ``{Optimizing Age of Information in Wireless Networks with Throughput Constraints},'' in \emph{IEEE INFOCOM}, 2018, pp. 1844--1852.

\bibitem{NetworkWithStochasticArrivals}
I.~Kadota and E.~Modiano, ``{Minimizing the Age of Information in Wireless Networks with Stochastic Arrivals},'' \emph{IEEE Trans. Mobile Comput.}, vol.~20, no.~3, pp. 1173--1185, 2021.

\bibitem{Bhat-JSAC}
R.~V. {Bhat}, R.~{Vaze}, and M.~{Motani}, ``{Minimization of Age of Information in Fading Multiple Access Channels},'' \emph{IEEE J. Sel. Areas Commun.}, vol.~39, no.~5, pp. 1471--1484, 2021.

\bibitem{goal_oriented_srp_2024}
R.~S. Pomaje, S.~Jayanth, R.~V. Bhat, and N.~Pappas, ``{Age of Information Minimization in Goal-Oriented Communication with Processing and Cost of Actuation Error Constraints},'' \emph{arXiv preprint arXiv:2508.07865}, 2025.

\bibitem{kumar2023}
K.~Saurav and R.~Vaze, ``{Scheduling to Minimize Age of Information With Multiple Sources},'' \emph{IEEE J. Sel. Areas Inf. Theory}, vol.~4, pp. 539--550, 2023.

\bibitem{wiopt_srp_2024}
L.~Wang, Q.~Wang, H.~H. Chen, and S.~Zhou, ``{Age of Information-Oriented Probabilistic Link Scheduling for Device-to-Device Networks},'' in \emph{IEEE WiOpt}, 2024, pp. 201--208.

\bibitem{random_access_age_2020}
X.~Chen, K.~Gatsis, H.~Hassani, and S.~S. Bidokhti, ``{Age of Information in Random Access Channels},'' \emph{IEEE Trans. Inf. Theory}, vol.~68, no.~10, pp. 6548--6568, 2022.

\bibitem{zhiguo2023age}
Z.~Ding, R.~Schober, and H.~V. Poor, ``{Age of Information: Can CR-NOMA Help?}'' \emph{IEEE Trans. Commun.}, vol.~71, pp. 6451--6467, 2023.

\bibitem{khodakhah2024balancing}
F.~Khodakhah, A.~Mahmood, {\v{C}}.~Stefanovi{\'c}, H.~Farag, P.~{\"O}sterberg, and M.~Gidlund, ``{Balancing AoI and Rate for Mission-Critical and eMBB Coexistence With Puncturing, NOMA, and RSMA in Cellular Uplink},'' \emph{IEEE Trans. Veh. Technol.}, vol.~74, no.~1, pp. 1475--1488, 2024.

\bibitem{sun2025ageinformationanalysisnomaassisted}
Y.~Sun, Y.~Ye, C.~Kai, Z.~Ding, and B.~Chen, ``{Age of Information Analysis for NOMA-Assisted Grant-Free Transmissions with Randomly Arrived Packets},'' \emph{arXiv preprint arXiv:2504.19441}, 2025.

\bibitem{9524846}
M.~Bastopcu and S.~Ulukus, ``{Age of Information for Updates with Distortion: Constant and Age-Dependent Distortion Constraints},'' \emph{IEEE/ACM Trans. Netw.}, vol.~29, no.~6, pp. 2425--2438, 2021.

\bibitem{9589182}
B.~Joshi, R.~V. Bhat, B.~Bharath, and R.~Vaze, ``{Minimization of Age of Incorrect Estimates of Autoregressive Markov Processes},'' in \emph{IEEE WiOpt}, 2021, pp. 1--8.

\bibitem{9137714}
A.~Maatouk, S.~Kriouile, M.~Assaad, and A.~Ephremides, ``{The Age of Incorrect Information: A New Performance Metric for Status Updates},'' \emph{IEEE/ACM Trans. Netw.}, vol.~28, no.~5, pp. 2215--2228, 2020.

\bibitem{10715699}
J.~Li and W.~Zhang, ``{Asymptotically Optimal Joint Sampling and Compression for Timely Status Updates: Age-Distortion Tradeoff},'' \emph{IEEE Trans. Veh. Technol.}, vol.~74, no.~2, pp. 2338--2352, 2025.

\bibitem{9611456}
Y.~İnan, R.~Inovan, and E.~Telatar, ``{Optimal Policies for Age and Distortion in a Discrete-Time Model},'' in \emph{IEEE Inf. Theory Workshop}, 2021, pp. 1--6.

\bibitem{10336741}
H.~Hong, J.~Jiao, T.~Yang, Y.~Wang, R.~Lu, and Q.~Zhang, ``{Age of Incorrect Information Minimization for Semantic-Empowered NOMA System in S-IoT},'' \emph{IEEE Trans. Wireless Commun.}, vol.~23, no.~6, pp. 6639--6652, 2024.

\bibitem{Guidan}
G.~Yao, C.-C. Wang, and N.~B. Shroff, ``{Age Minimization with Energy and Distortion Constraints},'' in \emph{Proceedings of the Twenty-fourth International Symposium on Theory, Algorithmic Foundations, and Protocol Design for Mobile Networks and Mobile Computing}, 2023, pp. 101--110.

\bibitem{gunduz_sc2019}
E.~Bourtsoulatze, D.~Burth~Kurka, and D.~Gündüz, ``{Deep Joint Source-Channel Coding for Wireless Image Transmission},'' \emph{IEEE Trans. Cognitive Commun. Netw.}, vol.~5, no.~3, pp. 567--579, 2019.

\bibitem{Sumei_Sum_BC_Region}
L.~Lei, D.~Yuan, C.~K. Ho, and S.~Sun, ``{Power and Channel Allocation for Non-Orthogonal Multiple Access in 5G Systems: Tractability and Computation},'' \emph{IEEE Trans. Wireless Commun.}, vol.~15, no.~12, pp. 8580--8594, 2016.

\bibitem{dai2015non}
L.~Dai, B.~Wang, Y.~Yuan, S.~Han, I.~Chih-lin, and Z.~Wang, ``{Non-orthogonal multiple access for 5G: solutions, challenges, opportunities, and future research trends},'' \emph{IEEE Commun. Mag.}, vol.~53, no.~9, pp. 74--81, 2015.

\bibitem{islam2017power}
S.~R. Islam, N.~Avazov, O.~A. Dobre, and K.-S. Kwak, ``{Power-Domain Non-Orthogonal Multiple Access (NOMA) in 5G Systems: Potentials and Challenges},'' \emph{IEEE Commun. Surveys Tuts.}, vol.~19, no.~2, pp. 721--742, 2016.

\bibitem{polyanskiy2010channel}
Y.~Polyanskiy, H.~V. Poor, and S.~Verdu, ``{Channel Coding Rate in the Finite Blocklength Regime},'' \emph{IEEE Trans. Inf. Theory}, vol.~56, no.~5, pp. 2307--2359, 2010.

\bibitem{Boyd}
S.~P. Boyd and L.~Vandenberghe, \emph{Convex Optimization}.\hskip 1em plus 0.5em minus 0.4em\relax Cambridge University Press, 2014.

\bibitem{Shor1985}
N.~Z. Shor, ``{The Subgradient Method},'' in \emph{{Minimization Methods for Non-Differentiable Functions}}, ser. Springer Series in Computational Mathematics.\hskip 1em plus 0.5em minus 0.4em\relax Springer, Berlin, Heidelberg, 1985, vol.~3, pp. 22--47.

\bibitem{Bertsekas2016}
D.~P. Bertsekas, \emph{{Nonlinear Programming}}, 3rd~ed.\hskip 1em plus 0.5em minus 0.4em\relax Massachusetts: Athena Scientific, 2016.

\bibitem{Nedic2009}
A.~Nedi\'{c} and A.~Ozdaglar, ``{Subgradient Methods for Saddle-Point Problems},'' \emph{Journal of Optimization Theory and Applications}, vol. 142, no.~1, pp. 205--228, 2009.

\end{thebibliography}
